\tikzstyle{rectangle}=[fill=white, draw=black, shape=rectangle]
\tikzstyle{red}=[fill=red, draw=black, shape=circle]
\tikzstyle{system_label}=[fill=white, draw=white, shape=circle]
\tikzstyle{lambda}=[-, draw={rgb,255: red,191; green,191; blue,191}]
\tikzstyle{state}=[<-]
\tikzstyle{new edge style 0}=[-]
\theoremstyle{definition}
\newtheorem*{definition*}{Definition}
\newtheorem{definition}{Definition}
\newcommand{\textdef}[1]{\textit{{#1}}}
\theoremstyle{plain}
\newtheorem*{property*}{Property}
\newtheorem{postulate}{Postulate}
\newtheorem{theorem}{Theorem}
\newtheorem*{theorem*}{Theorem}
\newtheorem*{corollary*}{Corollary}
\newtheorem*{proposition*}{Proposition}
\newtheorem*{conjecture*}{Conjecture}
\newtheorem*{question*}{Question}
\newtheorem*{problem*}{Problem}
\newtheorem*{lemma*}{Lemma}
\newtheorem{lemma}{Lemma}
\newtheorem*{example*}{Example}
\newtheorem*{remark*}{Remark}
\newtheorem{remark}{Remark}
\newtheorem*{proof*}{Proof}
\newcommand{\proofRight}{\ensuremath{\mathbf{\Longrightarrow ) \; }}} 
\newcommand{\proofLeft}{\ensuremath{\mathbf{\Longleftarrow ) \; }}}
\DeclareSymbolFont{sfletters}{OML}{cmbrm}{m}{it}
\DeclareMathSymbol{\salpha}{\mathord}{sfletters}{"0B}
\DeclareMathSymbol{\sbeta}{\mathord}{sfletters}{"0C}
\DeclareMathSymbol{\sgamma}{\mathord}{sfletters}{"0D}
\DeclareMathSymbol{\sdelta}{\mathord}{sfletters}{"0E}
\DeclareMathSymbol{\sepsilon}{\mathord}{sfletters}{"0F}
\DeclareMathSymbol{\szeta}{\mathord}{sfletters}{"10}
\DeclareMathSymbol{\seta}{\mathord}{sfletters}{"11}
\DeclareMathSymbol{\stheta}{\mathord}{sfletters}{"12}
\DeclareMathSymbol{\siota}{\mathord}{sfletters}{"13}
\DeclareMathSymbol{\skappa}{\mathord}{sfletters}{"14}
\DeclareMathSymbol{\slambda}{\mathord}{sfletters}{"15}
\DeclareMathSymbol{\smu}{\mathord}{sfletters}{"16}
\DeclareMathSymbol{\snu}{\mathord}{sfletters}{"17}
\DeclareMathSymbol{\sxi}{\mathord}{sfletters}{"18}
\DeclareMathSymbol{\spi}{\mathord}{sfletters}{"19}
\DeclareMathSymbol{\srho}{\mathord}{sfletters}{"1A}
\DeclareMathSymbol{\ssigma}{\mathord}{sfletters}{"1B}
\DeclareMathSymbol{\stau}{\mathord}{sfletters}{"1C}
\DeclareMathSymbol{\supsilon}{\mathord}{sfletters}{"1D}
\DeclareMathSymbol{\sphi}{\mathord}{sfletters}{"1E}
\DeclareMathSymbol{\schi}{\mathord}{sfletters}{"1F}
\DeclareMathSymbol{\spsi}{\mathord}{sfletters}{"20}
\DeclareMathSymbol{\somega}{\mathord}{sfletters}{"21}
\DeclareMathSymbol{\svarepsilon}{\mathord}{sfletters}{"22}
\DeclareMathSymbol{\svartheta}{\mathord}{sfletters}{"23}
\DeclareMathSymbol{\svarpi}{\mathord}{sfletters}{"24}
\DeclareMathSymbol{\svarrho}{\mathord}{sfletters}{"25}
\DeclareMathSymbol{\svarsigma}{\mathord}{sfletters}{"26}
\DeclareMathSymbol{\svarphi}{\mathord}{sfletters}{"27}
\def\<{\langle}\def\>{\rangle}
\newcommand{\mathDef}{:=}
\newcommand{\prim}[1]{{#1^{\prime}}}
\newcommand{\secondE}[1]{#1^{\prime \prime}}
\newcommand{\kronekerDelta}[2]{\delta_{ #1 , #2}}
\newcommand{\cartesianC}{\times}
\newcommand{\cartesianProduct}[2]{\ensuremath{#1 \cartesianC #2}}
\newcommand{\inverse}[1]{{#1}^{-1}}
\newcommand{\normGeneric}[2]{\ensuremath{\left\lVert {#1} \right\rVert_{#2}}}
\newcommand{\normOp}[1]{\normGeneric{#1}{op}}
\newcommand{\OPT}{\ensuremath{\Theta}}
\newcommand{\MCT}{\ensuremath{\Theta}}
\newcommand{\rbra}[1]{\left({#1}\right\vert}
\newcommand{\rket}[1]{\left\vert{#1}\right)}
\newcommand{\rbraSystem}[2]{{\left({#1}\right\vert}_{\system{#2}}}
\newcommand{\rketSystem}[2]{{\left\vert{#1}\right)}_{\system{#2}}}
\newcommand{\rbraketSystem}[3]{{\left( #1 \vphantom{#2} \right|\left. #2 \vphantom{#1} \right)}_{\system{#3}}}
\newcommand{\uniDetEff}{e}
\newcommand{\system}[1]{\ensuremath{\mathrm{#1}}}
\newcommand{\systemSequence}[2]{\ensuremath{\mathrm{#1}_{#2}}}
\newcommand{\trivialSystem}{\ensuremath{\system{I}}}
\newcommand{\s}[1]{\ustick{\scriptstyle{\system{#1}}}}%{\ustick{\scriptstyle{\system{#1}}}}
\newcommand{\sSequence}[2]{\ustick{\scriptstyle{\systemSequence{#1}{#2}}}}
\newcommand{\sSequencePrime}[2]{\ustick{\scriptstyle{\systemSequence{#1}{#2}'}}}
\newcommand{\sEnsemble}[3]{\ustick{\scriptstyle{ \left\{ \system{#1}_{#2} \right\}_{#2 \in #3} } }}
\newcommand{\sEnsembleDouble}[5]{\ustick{\scriptstyle{ \left\{ \system{#1}_{#2} \right\}_{#2 \in #3} \left\{ \system{#1}_{#4} \right\}_{#4 \in #5}} }}
\newcommand{\Sys}[1]{\ensuremath{\mathsf{Sys\left(\mathrm{#1}\right)}}}
\def\sysDimensionD{\mathsf{D}}
\newcommand{\sysDimension}[1]{\ensuremath{\sysDimensionD_{\system{#1}}}}
\newcommand{\outcomeSpace}[1]{\ensuremath{\mathsf{#1}}}
\newcommand{\outcome}[1]{\ensuremath{{#1}}}
\newcommand{\outcomeDouble}[2]{\left(\outcome{#1}, \outcome{#2} \right)}
\newcommand{\outcomeSingle}[2]{\ensuremath{ \outcome{#1} \in \outcomeSpace{#2} }}
\newcommand{\outcomeSpaceDouble}[2]{\ensuremath{\cartesianProduct{\outcomeSpace{#1}}{\outcomeSpace{#2}}}}
\newcommand{\outcomeSingleDouble}[4]{\ensuremath{ \outcomeDouble{#1}{#2} \in \outcomeSpaceDouble{#3}{#4} }}
\newcommand{\outcomeSpaceConditioned}[2]{\ensuremath{\outcomeSpace{#1}_{\outcome{#2}}}}
\newcommand{\identityTest}[1]{\mathscr{I}_{\system{#1}}}
\newcommand{\testNoDown}[1]{\ensuremath{\mathsf{#1}}}
\newcommand{\test}[2]{\ensuremath{\testNoDown{#1}_{\outcomeSpace{#2}}}}
\newcommand{\testPreparationNoDown}[1]{{\ensuremath{#1}}}
\newcommand{\testPreparation}[2]{\ensuremath{\testPreparationNoDown{#1}_{\outcomeSpace{#2}}}}
\newcommand{\testObservationNoDown}[1]{{\ensuremath{\text{#1}}}}
\newcommand{\testObservation}[2]{\ensuremath{\testObservationNoDown{#1}_{\outcomeSpace{#2}}}}
\newcommand{\testCollection}[1]{\ensuremath{\mathsf{Test\left(\mathrm{#1}\right)}}}
\newcommand{\testCollectionA}[1]{\ensuremath{\mathsf{Test\left(\system{#1}\right)}}}
\newcommand{\testCollectionAB}[2]{\ensuremath{\testCollection{\system{#1} \!\to\! \system{#2}}}}
\newcommand{\eventNoDown}[1]{\ensuremath{\mathscr{#1}}}
\newcommand{\event}[2]{\ensuremath{\eventNoDown{#1}_{\outcome{#2}}}}
\newcommand{\eventCG}[2]{\ensuremath{\eventNoDown{#1}_{\outcomeSpace{#2}}}}
\newcommand{\preparationEventNoDown}[1]{\ensuremath{#1}}
\newcommand{\preparationEvent}[2]{\ensuremath{\preparationEventNoDown{#1}_{\outcome{#2}}}}
\newcommand{\observationEventNoDown}[1]{\ensuremath{\text{#1}}}
\newcommand{\observationEvent}[2]{\ensuremath{\observationEventNoDown{#1}_{\outcome{#2}}}}
\newcommand{\observationUniqueDeterministic}{\observationEventNoDown{\uniDetEff}}
\newcommand{\probabilityEventNoDown}[1]{\ensuremath{ {#1} }}
\newcommand{\probabilityEvent}[2]{\ensuremath{ \probabilityEventNoDown{#1}_{\outcome{#2}} }}
\newcommand{\conditionedEvent}[3]{\eventNoDown{#1}^{\left(\outcome{#3}\right)}_{\outcome{#2}}}
\newcommand{\eventTest}[3]{\ensuremath{\left\{\eventNoDown{#1}_{\outcome{#2}} \right\}_{\outcomeSingle{#2}{#3}}}}
\newcommand{\observationEventTest}[3]{\ensuremath{\left\{\observationEventNoDown{#1}_{\outcome{#2}} \right\}_{\outcomeSingle{#2}{#3}}}}
\newcommand{\eventPreparationTest}[3]{\ensuremath{\left\{\preparationEventNoDown{#1}_{\outcome{#2}} \right\}_{\outcomeSingle{#2}{#3}}}}
\newcommand{\eventPreparationTestNoKet}[3]{\ensuremath{\left\{\preparationEventNoDown{#1}_{\outcome{#2}} \right\}_{\outcomeSingle{#2}{#3}}}}
\newcommand{\eventPreparationTestSystem}[4]{\ensuremath{\left\{\preparationEventNoDown{#1}_{\outcome{#2}} \right\}_{\outcomeSingle{#2}{#3}}}}
\newcommand{\eventObservationTest}[3]{\ensuremath{\left\{\observationEventNoDown{#1}_{\outcome{#2}}\right\}_{\outcomeSingle{#2}{#3}}}}
\newcommand{\eventObservationTestSystem}[4]{\ensuremath{\left\{\observationEventNoDown{#1}_{\outcome{#2}}\right\}_{\outcomeSingle{#2}{#3}}}}
\newcommand{\eventObservationTestNoBra}[3]{\ensuremath{\left\{\observationEventNoDown{#1}_{\outcome{#2}} \right\}_{\outcomeSingle{#2}{#3}}}}
\newcommand{\eventCollection}[1]{\ensuremath{\mathsf{Transf\left(\mathrm{#1}\right)}}}
\newcommand{\eventCollectionAB}[2]{\ensuremath{\eventCollection{\system{#1} \!\to\! \system{#2}}}}
\newcommand{\seqC}{\circ}
\newcommand{\sequentialComp}[2]{{#1} \seqC {#2}}
\newcommand{\sequentialEventTest}[6]{\left\{ \sequentialComp{\mathcal{#4}_{\outcome{#5}}}{\mathcal{#1}_{\outcome{#2}}} \right\}_{\outcomeSingleDouble{#2}{#5}{#3}{#6}}}
\newcommand{\paralC}{ \boxtimes }
\newcommand{\parallelComp}[2]{{#1} \paralC {#2}}
\newcommand{\parallelEventTest}[6]{\left\{ \parallelComp{\mathcal{#1}_{\outcome{#2}}}{\mathcal{#4}_{\outcome{#5}}} \right\}_{\outcomeSingleDouble{#2}{#5}{#3}{#6}}}
\newcommand{\Braid}{\mathcal{S}}
\newcommand{\permutationCollection}[1]{\ensuremath{\mathsf{Permutation\left(\mathrm{#1}\right)}}}
\newcommand{\permutationCollectionAB}[2]{\permutationCollection{\system{#1} \!\to\! \system{#2}}}
\newcommand{\St}[1]{\ensuremath{\mathsf{St\left(\system{#1}\right)}}}					%States
\newcommand{\StOPT}[1]{\ensuremath{\mathsf{St\left({#1}\right)}}}					    %States
\newcommand{\Eff}[1]{\ensuremath{\mathsf{Eff\left(\system{#1}\right)}}}					    %Effects
\newcommand{\EffR}[1]{\ensuremath{\mathsf{Eff}_{\mathbb{R}}\left(\system{#1}\right)}}	    %Real span
\newcommand{\EffOPT}[1]{\ensuremath{\mathsf{Eff\left({#1}\right)}}}					    %Effects
\newcommand{\Transf}[2]{\ensuremath{\mathsf{Transf\left(\system{#1}\!\to\!\system{#2}\right)}}}
\newcommand{\RevTransfA}[1]{\ensuremath{\mathsf{RevTransf\left(\system{#1}\right)}}}
\newcommand{\Prep}[1]{\ensuremath{\mathsf{Prep\left(\system{#1}\right)}}}
\newcommand{\Obs}[1]{\ensuremath{\mathsf{Obs\left(\system{#1}\right)}}}
\newcommand{\measurePrepare}[4]{
	\myQcircuit{
		&\s{#1}\qw&\measureD{#4}&\prepareC{#3}&\s{#2}\qw&\qw&
	}
}
\newcommand{\minimalDeterministicCausalDestroyReprep}[8]{
	\myQcircuit{
		&\s{#1}\qw&\multigate{1}{#6}&\s{#3}\qw&\measureD{\observationUniqueDeterministic}&\pureghost{}&\prepareC{#8}&\s{#4}\qw&\multigate{1}{#7}&\s{#2}\qw&\qw&
		\\
		&\pureghost{}&\pureghost{#6}&\qw&\qw&\s{#5}\qw&\qw&\qw&\ghost{#7}&
	}
}
\newcommand{\minimalDeterministicCausalDestroyReprepSequencePrime}[9]{
	\myQcircuit{
		&\s{#1}\qw&\multigate{1}{#6}&\sSequencePrime{#3}{#9}\qw&\measureD{\observationUniqueDeterministic}&\pureghost{}&\prepareC{#8}&\sSequencePrime{#4}{#9}\qw&\multigate{1}{#7}&\s{#2}\qw&\qw&
		\\
		&\pureghost{}&\pureghost{#6}&\qw&\qw&\sSequence{#5}{#9}\qw&\qw&\qw&\ghost{#7}&
	}
}
\newcommand{\genericT}[8]{
	\myQcircuit{
		&\pureghost{}&\multiprepareC{1}{#5}&\qw&\s{#4}\qw&\qw&\multimeasureD{1}{#6}&
		\\
		&\pureghost{}&\pureghost{#5}&\s{#2}\qw&\braidingSym&\s{#1}\qw&\ghost{#6}&
		\\
		&\s{A}\qw&\multigate{1}{#7}&\s{#1}\qw&\braidingGhost&\s{#2}\qw&\multigate{1}{#8}&\s{B}\qw&\qw&
		\\
		&\pureghost{}&\pureghost{#7}&\qw&\s{#3}\qw&\qw&\ghost{#8}&
	}
}
\newcommand{\transfArrow}[1]{\stackrel{#1}{\longrightarrow}}
\newcommand{\allowDisplayBreaks}[1]{
	\begingroup
	\allowdisplaybreaks
	
	#1
	
	\endgroup
}
\newcommand{\letter}{article}
\newcommand{\citetheorem}[1]{{#1}}
\newcommand\footnoteref[1]{\protected@xdef\@thefnmark{\ref{#1}}\@footnotemark}
\newcommand{\aref}[1]{\hyperref[#1]{Appendix~\ref*{#1}}}
\newcommand{\old}[1]{\textcolor{red}{\st{#1}}}
\newcommand{\qw}[1][-1]{\ar @{-} [0,#1]}
\newcommand{\gate}[1]{*{\xy *+<.6em>{#1};p\save+LU;+RU **\dir{-}\restore\save+RU;+RD **\dir{-}\restore\save+RD;+LD **\dir{-}\restore\POS+LD;+LU **\dir{-}\endxy} \qw}
\newcommand{\measureD}[1]{*{\xy*+=+<.5em>{\vphantom{\rule{0em}{.1em}#1}}*\cir{r_l};p\save*!R{#1} \restore\save+UC;+UC-<.5em,0em>*!R{\hphantom{#1}}+L **\dir{-} \restore\save+DC;+DC-<.5em,0em>*!R{\hphantom{#1}}+L **\dir{-} \restore\POS+UC-<.5em,0em>*!R{\hphantom{#1}}+L;+DC-<.5em,0em>*!R{\hphantom{#1}}+L **\dir{-} \endxy} \qw}
\newcommand{\multimeasureD}[2]{*+<1em,.9em>{\hphantom{#2}}\save[0,0].[#1,0];p\save !C *{#2},p+LU+<0em,0em>;+RU+<-.8em,0em> **\dir{-}\restore\save +LD;+LU **\dir{-}\restore\save +LD;+RD-<.8em,0em> **\dir{-} \restore\save +RD+<0em,.8em>;+RU-<0em,.8em> **\dir{-} \restore \POS !UR*!UR{\cir<.9em>{r_d}};!DR*!DR{\cir<.9em>{d_l}}\restore \qw}
\newcommand{\multigate}[2]{*+<1em,.9em>{\hphantom{#2}} \qw \POS[0,0].[#1,0];p !C *{#2},p \save+LU;+RU **\dir{-}\restore\save+RU;+RD **\dir{-}\restore\save+RD;+LD **\dir{-}\restore\save+LD;+LU **\dir{-}\restore}
\newcommand{\ghost}[1]{*+<1em,.9em>{\hphantom{#1}} \qw}
\newcommand{\ustick}[1]{*!D!<0em,-.5em>=<0em>{#1}}
\newcommand{\Qcircuit}[1][0em]{\xymatrix @*=<#1>}
\newcommand{\pureghost}[1]{*+<1em,.9em>{\hphantom{#1}}}
\newcommand{\multiprepareC}[2]{*+<1em,.9em>{\hphantom{#2}}\save[0,0].[#1,0];p\save !C
  *{#2},p+RU+<0em,0em>;+LU+<+.8em,0em> **\dir{-}\restore\save +RD;+RU **\dir{-}\restore\save
  +RD;+LD+<.8em,0em> **\dir{-} \restore\save +LD+<0em,.8em>;+LU-<0em,.8em> **\dir{-} \restore \POS
  !UL*!UL{\cir<.9em>{u_r}};!DL*!DL{\cir<.9em>{l_u}}\restore}
\newcommand{\prepareC}[1]{*{\xy*+=+<.5em>{\vphantom{#1\rule{0em}{.1em}}}*\cir{l^r};p\save*!L{#1} \restore\save+UC;+UC+<.5em,0em>*!L{\hphantom{#1}}+R **\dir{-} \restore\save+DC;+DC+<.5em,0em>*!L{\hphantom{#1}}+R **\dir{-} \restore\POS+UC+<.5em,0em>*!L{\hphantom{#1}}+R;+DC+<.5em,0em>*!L{\hphantom{#1}}+R **\dir{-} \endxy}}
\newcommand{\braidingGhost}{\ghost{0em}}
\newcommand{\braidingInvId}{*+<1em,.9em>{\hphantom{0em}}; \POS[0,0]+R+<-0.98em,-0.439em>;\POS[1,0]+L+<0.98em,0.439em> **\dir{-}; \POS[0,0]*!LU{\cir<1.5em>{ur_r}};  \POS[1,0]*!RD{\cir<1.5em>{dl_l}}}
\newcommand{\braidingSym}{*+<1em,.9em>{\hphantom{0em}} \qw; \POS[0,0]+L+<0.98em,-0.439em>;\POS[1,0]+R+<-0.98em,0.439em> **\dir{-}; \POS[0,0]+R+<-0.98em,-0.439em>;\POS[1,0]+L+<0.98em,0.439em> **\dir{-}; \POS[0,0]*!RU{\cir<1.5em>{r_dr}}; \POS[0,0]*!LU{\cir<1.5em>{ur_r}};  \POS[1,0]*!RD{\cir<1.5em>{dl_l}}; \POS[1,0]*!LD{\cir<1.5em>{l_ul}}}
\newcommand{\splitterGhost}{\pureghost{0em}}
\newcommand{\splitter}{*+<1em,.9em>{\hphantom{0em}} \qw; \POS[0,0]+L+<0.98em,-0.439em>;\POS[1,0]+R+<-0.98em,0.439em> **\dir{-}; \POS[0,0]+L;\POS[0,0]+R **\dir{-}; \POS[0,0]*!RU{\cir<1.5em>{r_dr}}; \POS[1,0]*!LD{\cir<1.5em>{l_ul}}}
\newcommand{\myQcircuit}[1]{\begin{aligned} \Qcircuit @C=0.8em @R=0.8em {#1} \end{aligned} }
\newcommand{\myQcircuitSupMat}[1]{\begin{aligned} \Qcircuit @C=1em @R=1em {#1} \end{aligned} }
\begin{document}
\title{Measurement incompatibility is strictly stronger than disturbance}

\author{Marco Erba}
\email{marco.erba@ug.edu.pl}
\affiliation{International Centre for Theory of Quantum Technologies, Uniwersytet Gdański, ul.~Jana Bażyńskiego 1A, 80-309 Gdańsk, Poland}

\author{Paolo Perinotti}
\email{paolo.perinotti@unipv.it}
\affiliation{Universit\`a degli Studi di Pavia, Dipartimento di Fisica, QUIT Group}
\affiliation{INFN Gruppo IV, Sezione di Pavia, via Bassi 6, 27100 Pavia, Italy}

\author{Davide Rolino}
\email{davide.rolino01@universitadipavia.it}
\affiliation{Universit\`a degli Studi di Pavia, Dipartimento di Fisica, QUIT Group}
\affiliation{INFN Gruppo IV, Sezione di Pavia, via Bassi 6, 27100 Pavia, Italy}

\author{Alessandro Tosini}
\email{alessandro.tosini@unipv.it}
\affiliation{Universit\`a degli Studi di Pavia, Dipartimento di Fisica, QUIT Group}
\affiliation{INFN Gruppo IV, Sezione di Pavia, via Bassi 6, 27100 Pavia, Italy}

\begin{abstract}
  The core of Heisenberg's heuristic argument for the uncertainty principle, involving the famous $\gamma$-ray microscope \emph{Gedankenexperiment}, hinges upon the existence of measurements that irreversibly alter the state of the system on which they are acting, causing an irreducible disturbance on subsequent measurements. The argument was put forward to justify measurement incompatibility in quantum theory, namely, the existence of measurements that cannot be performed jointly---a feature that is now understood to be different from irreversibility of measurement disturbance, though related to it. In this \letter{}, on the one hand, we provide a compelling argument showing that measurement incompatibility is indeed a sufficient condition for irreversibility of measurement disturbance; while, on the other hand, we exhibit a toy theory, termed the minimal classical theory (MCT), that is a counterexample for the converse implication.
  This theory is classical, hence it does not have complementarity nor preparation uncertainty relations, and it is both Kochen-Specker and generalised noncontextual. However, MCT satisfies not only irreversibility of measurement disturbance, but also the properties of no-information without disturbance and no-broadcasting, implying that these cannot be understood \emph{per se} as signatures of nonclassicality.
\end{abstract}
	
\maketitle

\section{Introduction}
Since Heisenberg's $\gamma$-ray microscope 
\emph{Gedankenexperiment}~\cite{article:heisemberg}, the relation between measurement disturbance and the existence of pairs of observables that cannot be jointly measured has puzzled the authors that tackled quantum measurement theory~\cite{article:robertson,article:Lahti1980,PhysRevA.53.2038,article:ozawaSecond,DAriano:2003aa,article:busch2009,article:busch2004,article:teikoSimultaneous,article:heinosaari2019}. Over the years, several arguments have been proposed in favour of the fact that these two facets of quantum theory might not necessarily be equivalent. While it may seem intuitively true that the impossibility of jointly measuring two observables necessarily implies measurement disturbance, a proof of this fact has never been given in a theory-independent fashion. On the other hand, it is not even intuitive whether or not the converse implication should hold true. Indeed, no conclusive argument has been given so far in favour or against the latter. The main difficulty in this direction is that quantum theory (QT) exhibits both features, while classical theory (CT) exhibits none of them.

In order to understand the logical relation between incompatibility of observations\footnote{From now on, we will use the word \emph{observation} to denote a measurement where the output system is discarded, or simply disregarded. Notice that any measurement determines an observation, corresponding to the measurement itself followed by a discard of the output system.} and irreversibility of measurement disturbance, one needs to move outside the limited scenarios of QT and CT, broadening the perspective to the wider context of general probabilistic theories. In Ref.~\cite{article:heinosaari2019}, the authors exhibit a theory where there are some measurements that cause irreversible disturbance, while the corresponding observations are compatible with all the remaining ones. However, so far no theory has been exhibited such that \emph{all} of its observations are compatible, and yet their measurements cause irreversible disturbance, thus decoupling irreversibility from incompatibility.

In the present \letter{}, we address the above question in the framework of operational probabilistic theories (OPTs)~\cite{article:probTheoriesPurif,book:quantumFromPrinciples,book:quantumTheoryFromFirstPrinciples,article:BCT}. These are generic theories of information, including QT and CT as particular cases, but also encompassing a wealth of toy theories sharing the same basic compositional structures for systems and processes. This framework is the appropriate one to seek general arguments about the logical dependency of different properties that physical theories may exhibit. Indeed, the operational-probabilistic framework has been devised in order to survey general physical theories ``from the outside.'' Relevant quantum properties, such as entanglement and contextuality, have been then investigated in a similar generalised  scenario. For instance, in Ref.~\cite{article:shelby} entanglement is established as an inevitable feature of any theory superseding CT while admitting emergent classicality. Furthermore,  in Ref.~\cite{article:spekkensSecond} a contextuality witness is deduced in terms of the functional form of an uncertainty relation, thus pinpointing some aspects of quantum uncertainty that may constitute genuine evidence of nonclassicality.

The properties of interest for our analysis are as follows: (i) incompatibility of observations, along with uncertainty, and (ii) irreversibility of measurement disturbance. For case (i), a family of theories being of interest is the one of \emph{epistemically restricted} classical phase-space theories~\cite{PhysRevA.75.032110,Spekkens2016}, i.e., operational theories with a phase space~\cite{PhysRevLett.128.040405} that is possibly discrete~\cite{PhysRevA.70.062101}, where some restriction in the spirit of Heisenberg's uncertainty principle singles out a minimal volume in phase space that can be identified by a pure state~\cite{PhysRevA.86.012103}. In the present \letter{}, we discuss a family of theories motivated by point (ii), i.e., theories where irreversibility of the measurement disturbance holds.

In detail, in this \letter{} we show that the existence of pairs of incompatible observations---a property that we term \emph{incompatibility} for short---is a strictly stronger condition than the existence of operations that irreversibly disturb the state of the system on which they act---that we name \textdef{irreversibility}.  In order to achieve this, we will prove that the former property implies the latter one, while exhibiting a toy theory that violates the converse implication. The counterexample consists in a theory---that we call minimal classical theory (MCT)---where all observations are compatible, but any measurement irreversibly alters the state of the system. This theory is obtained from CT by restricting to the bone the set of operations one is allowed to perform on a system. The states of systems of MCT being classical, this theory also represents a proof that, contrarily to what is normally believed, the disturbance action caused by the interaction with a system is not a characteristic property of the quantum world. Moreover, this result is complementary to the one of Ref.~\cite{article:spekkensSecond}, in that MCT has irreversibility while having no incompatibility of observations---hence no uncertainty thereon---thus being clearly also Kochen-Specker noncontextual.
Furthermore, MCT, being embeddable~\cite{PRXQuantum.2.010331} into classical theory, is generalized-noncontextual according to Ref.~\cite{schmid2020structure}. We observe that, whereas the pure states of every system can be jointly and perfectly discriminated within the theory, MCT satisfies the property of \emph{no-information without disturbance}~\cite{article:busch2009,book:quantumTheoryFromFirstPrinciples,article:informationDisturbance,article:heinosaari2019}. Furthermore, \emph{generalised no-broadcasting}~\cite{PhysRevLett.99.240501,barnum2006cloning}---and, as a particular case, \emph{no-cloning}---hold in the theory.

\section{Framework}

We now sketch the framework of operational probabilistic theories which is here leveraged. An OPT is meant as a theory of systems and their processes. The probabilistic aspect consists in rules to assess the probability of events in any network of processes occurring on a given set of systems. 
In detail, a generic OPT \OPT{} has a collection of \textdef{systems} and of \textdef{tests} thereon. Systems are denoted by capital roman letters $\system{A}, \system{B},\ldots \in \Sys{\OPT{}}$. As an example, every system in QT corresponds to a complex Hilbert space. Tests, denoted as $\test{T}{X} = \eventTest{T}{x}{X}$, represent the experiments that one can perform, acting on a given input system \system{A}, and obtaining the output system \system{B}. The class of tests with input system $\system A$ and output system $\system B$ is denoted by $\testCollectionAB{A}{B}$.
Every test consists in a collection of possible \emph{transformations} $\eventNoDown{T}_{x}$, labelled by the possible outcomes $x\in\outcomeSpace{X}$ of the experiment. A finite \textdef{outcome space} (\outcomeSpace{X}) is associated with each test. 
The class of transformations with input $\system A$ and output $\system B$ is denoted by $\eventCollectionAB{A}{B}$.
Referring again to QT, tests are \emph{quantum instruments}, and transformations are \emph{quantum operations}. For example, in a Stern--Gerlach experiment the test that models the action of the magnetic field is of the form $\testNoDown{T}_{\left( \uparrow, \downarrow \right)} = \left\{ \eventNoDown{T}_{\uparrow}, \eventNoDown{T}_{\downarrow} \right\}$, where the two transformations $ \eventNoDown{T}_{\uparrow}$ and $ \eventNoDown{T}_{\downarrow}$ represent the two occurrences in which the system collapses into a state with spin up or down, respectively. A transformation associated with a test whose outcome space has just one element is called \emph{deterministic}. A deterministic transformation does not provide information (the associated test has a unique outcome, occurring with certainty), and can represent, e.g., the evolution of an open system. In QT, a deterministic transformation is a quantum channel.

The first main feature of tests is that they can be performed in a sequence, where a sequence can be defined whenever the input of the subsequent test is the same as the output of the preceding test. Tests (and transformations) will be drawn as boxes, and this makes the representation of a sequence of tests (transformations) more intuitive,
\begin{align*}
	 \myQcircuit{
		&\s{A}\qw&\gate{\test{T}{X}}&\s{B}\qw&\gate{\test{G}{Y}}&\s{C}\qw&\qw&
	}=\quad
	\myQcircuit{
		&\s{A}\qw&\gate{\sequentialComp{\test{G}{Y}}{\test{T}{X}}}&\s{C}\qw&\qw&
	},
\end{align*}
where $\sequentialComp{\test{G}{Y}}{\test{T}{X}}=\sequentialEventTest{T}{x}{X}{G}{y}{Y}$. 

A second defining structure of OPTs is \emph{parallel composition}, that allows one to combine any pair of systems $\system A$ and $\system B$ in a \emph{composite system} $\system{AB}$. Given a composite system, moreover, one can independently apply tests $\test{T}{X}$ and $\test{G}{Y}$ on the two components. The resulting test is the parallel composition $\parallelComp{\test{T}{X}}{\test{G}{Y}}$ that is drawn as follows, 
\begin{align*}
	\myQcircuit{
		&\s{A}\qw&\gate{\test{T}{X}}&\s{C}\qw&\qw&
		\\
		&\s{B}\qw&\gate{\test{G}{Y}}&\s{D}\qw&\qw&
	} &= \quad 
	\myQcircuit{
		&\s{AB}\qw&\gate{\parallelComp{\test{T}{X}}{\test{G}{Y}}}&\s{CD}\qw&\qw&
	},
\end{align*}
where $\parallelComp{\test{T}{X}}{\test{G}{Y}}=\parallelEventTest{T}{x}{X}{G}{y}{Y}$. Both sequential and parallel composition are associative.

A special kind of test consists in the \emph{preparation} of a system $\system A$. These tests are called \textdef{preparation tests}, and their class deserves a dedicated symbol: $\Prep{A}$. The possible
transformations of a preparation test are \emph{states}, that can be denoted as $\rketSystem{\preparationEventNoDown{\rho}}{A}\in \St{A}$. Similarly, a special class of tests is that representing measurements after which the system $\system A$ is 
destroyed, discarded, or just neglected.
As mentioned in Ref.~\cite{Note1}, these tests are called \textdef{observation tests} or \emph{observations} for short, and their set is denoted by $\Obs{A}$. The transformations $\rbraSystem aA$ of an observation test are called \textdef{effects}, and their set is denoted by $\Eff{A}$. Observation tests are the generalisation of positive operator-valued measures (POVMs) of QT to generic theories. We will draw states and effects as
\begin{align*}
\myQcircuit{\prepareC{\preparationEventNoDown{\rho}}&\s{A}\qw&\qw}\ ,\quad\myQcircuit{
		&\s{A}\qw&\measureD{\observationEventNoDown{a}}}\  , 
\end{align*}
respectively. Preparation (observation) tests can be regarded as special tests whose input (output) system is \emph{trivial}. The (unique) trivial system is denoted by $\trivialSystem{}$. From an operational point of view this system represents ``nothing the theory cares to describe''~\cite{book:quantumFromPrinciples}. The trivial system behaves as a unit for parallel composition: $\system{AI}=\system{IA}=\system A$. \par

For every system $\system A$ of the theory, we require the existence of a deterministic transformation $\identityTest{A}$---called \textdef{identity}---representing ``doing nothing'' on the system, i.e., such that $\identityTest{A} \eventNoDown{T} = \eventNoDown{T} \identityTest{B} = \eventNoDown{T}$, for every transformation $\eventNoDown{T} \in \eventCollectionAB{A}{B}$. A transformation $\eventNoDown{T}\in \eventCollectionAB{A}{B}$ is \textdef{reversible} if there exists $\inverse{\eventNoDown{T}}\in \eventCollectionAB{B}{A}$ such that $\eventNoDown{T} \inverse{\eventNoDown{T}} =\identityTest{B}$ and $ \inverse{\eventNoDown{T}} \eventNoDown{T} = \identityTest{A}$. Moreover, for every pair of systems $\system A$ and $\system B$, we require the existence of the swap operation representing the exchange of the two systems, i.e.,
\begin{equation*}
	\myQcircuit{
		&\s{A}\qw&\braidingSym&\s{B}\qw&\qw&
		\\
		&\s{B}\qw&\braidingGhost&\s{A}\qw&\qw&
	},
\end{equation*}
which is a deterministic reversible transformation. Tests and transformations  can slide along the crossed wires through a swap.

In an OPT every circuit that starts with a preparation test and ends with an observation test represents a probability distribution for the transformations in the circuit. For example,
\begin{equation*}
	\myQcircuit{
&\prepareC{\preparationEvent{\rho}{x}}&\s{A}\qw&\gate{\event{T}{y}}&\s{B}\qw&\measureD{\observationEvent{a}{Z}}&
	} \mathDef \probabilityEventNoDown{p}\left( \outcome{x}, \outcome{y}, \outcome{z} \vert \testPreparation{\rho}{X}, \test{T}{Y}, \testObservation{a}{Z} \right).
\end{equation*}
An agent performing a test can discard information regarding the outcome. Correspondingly, in an OPT we require that for every test $\test{T}{X}$ and every disjoint partition $\{Z_y\}_{y\in\outcomeSpace Y}$ of the outcome space $\outcomeSpace X$ there exists the test $\testNoDown{T}^{'}_{\outcomeSpace{Y}}$ representing the same operation, where the outcome $y\in\outcomeSpace Y$ stands for ``the outcome of the test $\test{T}{X}$ belongs to $Z_y$.'' The transformation $\eventNoDown T'_y=\sum_{x\in Z_y}\eventNoDown T_x$ is called \textdef{coarse-grained transformation}. 
Obviously, given a test \test{T}{X} the full coarse graining $\eventCG{T}{X} = \sum_{\outcome{x} \in \outcomeSpace{X}} \event{T}{x}$ is deterministic.

The above operational apparatus naturally gives rise to a linear-space structure, where transformations are embedded in real vector spaces.\\ %As a standard assumption, we will restrict our study to finite-dimensional spaces of transformations.\\
The spaces of transformations and tests within a theory are required to be Cauchy complete. This follows from the idea that if, within a given theory, there is a procedure to prepare a transformation (or a test) with arbitrary precision, then it is natural to assume that the latter is an ideal transformation (or test) to be included in the theory. This requirement is particularly relevant for the present work, since it allows one to distinguish two theories that share similar building blocks, but where, nonetheless, the operational procedures of one of them cannot be approximated by those of the other one. Importantly, this will allow us to prove that MCT is strictly different from ordinary CT.

For more details about the framework, we refer the reader to Refs.~\cite{book:quantumTheoryFromFirstPrinciples,article:BCT}.

\section{Definitions}\label{sec:defs}
In the following we will consider only \textit{causal} OPTs. These are theories where any system admits of a unique deterministic effect, denoted by $\rbraSystem{\observationUniqueDeterministic}{A} \in \Eff{A}$. This condition is equivalent to the property that the probability distributions of preparation tests do not depend on the choice of the observation test at their output---a property known as \emph{no-signaling from the future}. This property also implies spatial no-signaling, namely, the property of \emph{no-signaling without interaction}~\cite{book:quantumTheoryFromFirstPrinciples,article:BCT}.

We now introduce the notion of compatibility of observation tests, which will play a central role in our results. The definition is borrowed from a wide literature on the subject~(see, e.g., Refs.~\cite{ludwig,Busch_2013,PhysRevLett.124.120401}), where compatibility is ubiquitously identified with joint measurability. In precise terms, we say that the observation tests $\testObservation{a}{X}\in\Obs{A}$ and $\testObservation{b}{Y}\in\Obs{A}$ are \emph{compatible} if there exists a third test $\testObservation{c}{X\times Y}\in\Obs{A}$ such that
\begin{align*}
	&\myQcircuit{
		&\s{A}\qw&\measureD{\observationEvent{a}{x}}& 
	}  = \quad \sum_{\outcomeSingle{y}{Y}} \myQcircuit{
		&\s{A}\qw&\measureD{\observationEvent{c}{\left(x,y\right)}}&
	} \quad \forall \outcomeSingle{x}{X}, \\	
	&\myQcircuit{
		&\s{A}\qw&\measureD{\observationEvent{b}{y}}& 
	}  = \quad \sum_{\outcomeSingle{x}{X}} 
	\myQcircuit{
		&\s{A}\qw&\measureD{\observationEvent{c}{\left(x,y\right)}}&
	}  \quad \forall \outcomeSingle{y}{Y}.
\end{align*}
Accordingly, we will say that a theory has \emph{incompatibility} if it admits of a system $\system A$ and a pair of observation tests for $\system A$ that are not compatible.

In order to determine whether an OPT exhibits tests with a disturbance in the sense of Heisenberg, i.e., when an OPT has \textdef{irreversibility}, we require the existence of at least a test that irreversibly alters the state of the system on which it acts. In this way, we are stating that these operations set a direction for the arrow of time, in analogy with the second law of thermodynamics. Accordingly, we say that a test is \textdef{intrinsically irreversible} if its occurrence precludes the possibility of implementing some other test~\cite{article:incompatibility} on the same input system. Notice that in general one can implement a test using ancillary systems, and our definition allows one to postprocess them along with the output system. The precise definition of intrinsic irreversibility is then the following. We say that the test $\test{A}{X}\in\testCollectionAB{A}{B}$ is intrinsically irreversible if it \emph{excludes} some other test $\test{B}{Y}\in\testCollectionAB{A}{C}$~\cite{article:incompatibility}, i.e., there exists a test $\test{B}{Y}\in\testCollectionAB{A}{C}$ such that, for every $\test{C}{Z}\in\testCollectionAB{A}{BE}$ and every disjoint partition $\{S_x\}_{x\in\outcomeSpace X}$ of $\outcomeSpace Z$ with
\begin{equation}
	\label{eq:decompa}
	\myQcircuit{
		&\s{A}\qw&\gate{\eventNoDown{A}_x}&\s {B}\qw&\qw&
	} = \sum_{z\in S_x}
	\myQcircuit{
		&\s{A}\qw&\multigate{1}{\eventNoDown{C}_z}&\qw&\s {B}\qw&\qw&
		\\
		&\pureghost{}&\pureghost{\eventNoDown{C}_z}&\s{E}\qw&\measureD{e}}
\end{equation}
%\begin{align}
%\label{eq:decompa}
%\begin{aligned}
%\myQcircuit{
%			&\s{A}\qw&\gate{\eventNoDown{A}_x}&\s {B}\qw&\qw}
%\end{aligned}=\sum_{z\in S_x}
%\begin{aligned}
%\myQcircuit{
%			&\s{A}\qw&\multigate{1}{\eventNoDown{C}_z}&\s {B}\qw&\qw\\
%			&&\pureghost{\eventNoDown{C}_z}&\s{E}\qw&\measureD{e}}
%\end{aligned},
%\end{align}
there exists no postprocessing $\mathsf{P}^{(z)}_{\mathsf Y}\in\testCollectionAB{BE}{C}$ such that
\begin{equation}
	\label{eq:decompb}
	\myQcircuit{
		&\s{A}\qw&\gate{\eventNoDown{B}_y}&\s {C}\qw&\qw&
	} = \quad \sum_{z\in \outcomeSpace Z}
	\myQcircuit{
		&\s{A}\qw&\multigate{1}{\eventNoDown{C}_z}&\s {B}\qw&\multigate{1}{\eventNoDown{P}^{(z)}_y}&\s{C}\qw&\qw&
		\\
		&\pureghost{}&\pureghost{\eventNoDown{C}_z}&\s{E}\qw&\ghost{\eventNoDown{P}^{(z)}_y}&\pureghost{}&
	}
\end{equation}
%\begin{align}
%\label{eq:decompb}
%\begin{aligned}
%\myQcircuit{
%			&\s{A}\qw&\gate{\eventNoDown{B}_y}&\s {C}\qw&\qw}
%\end{aligned}=\sum_{z\in \outcomeSpace Z}
%\begin{aligned}
%\myQcircuit{
%			&\s{A}\qw&\multigate{1}{\eventNoDown{C}_z}&\s {B}\qw&\multigate{1}{\eventNoDown{P}^{(z)}_y}&\s{C}\qw&\qw\\
%			&&\pureghost{\eventNoDown{C}_z}&\s{E}\qw&\ghost{\eventNoDown{P}^{(z)}_y}&&}
%\end{aligned}.
%\end{align}
A first result that we can prove is that a test is intrinsically irreversible if and only if it excludes the  identity test. Indeed, if this is the case, the above definition holds choosing $\test{B}{Y}=\{\identityTest{A}\}$. On the other hand, by contradiction, if $\test{A}{X}$ can be postprocessed to the identity test, then it can be postprocessed to any other test. The detailed proof of the previous statement can be found in \aref{app:irreversibility}.

In the light of the above discussion, we will say that a theory has \emph{irreversibility} if it admits of a test that is intrinsically irreversible. Notice that, according to our definition, in QT---where all channels admit of a unitary dilation---no channel is intrinsically irreversible. On the other hand, almost all quantum tests are intrinsically irreversible. Irreversibility thus stems, at least in QT, from the very extraction of information in a measurement.

\section{Incompatibility versus irreversibility} \label{sec:irr_implies_inc}

\subsection{Incompatibility implies irreversibility}

We can now prove the first of our two main results: The existence of incompatible observation tests implies that the theory has irreversibility.

Given two observation tests $\eventObservationTestSystem{a}{x}{X}{A}$, $\eventObservationTestSystem{b}{y}{Y}{A} \in \Obs{A}$, if they do not exclude each other, then one can straightforwardly prove that they are compatible. Hence, incompatibility is sufficient for irreversibility. Actually, in any operational theory with nontrivial systems, incompatibility is sufficient also for intrinsic irreversibility of some tests with nontrivial output. In order to prove this, we show that there always exist two tests $\eventTest{T}{x}{X} \in \testCollectionAB{A}{B}$ and $\eventTest{G}{y}{Y} \in \testCollectionAB{A}{C}$ such that
\begin{equation}
	\begin{aligned}
		\label{eqt:formula1}
		&\myQcircuit{
			&\s{A}\qw&\measureD{\observationEvent{a}{x}}&
		} = \quad \myQcircuit{
			&\s{A}\qw&\gate{\event{T}{x}}&\s{B}\qw&\measureD{\observationUniqueDeterministic}&
		} \quad \forall \outcomeSingle{x}{X}, \\[10pt] &\myQcircuit{
			&\s{A}\qw&\measureD{\observationEvent{b}{y}}&
		} = \quad \myQcircuit{
			&\s{A}\qw&\gate{\event{G}{y}}&\s{C}\qw&\measureD{\observationUniqueDeterministic}&
		} \quad \forall \outcomeSingle{y}{Y}.
	\end{aligned}
\end{equation}
Indeed, in every nontrivial OPT and for every system $\system{A}$, it is always possible to choose a measure-and-prepare test
\begin{equation*}
	\myQcircuit{
		&\s{A}\qw&\gate{\event{T}{x}}&\s{B}\qw&\qw&
	} = \quad \myQcircuit{
		&\s{A}\qw&\measureD{\observationEvent{a}{x}}&\prepareC{\preparationEventNoDown{\rho}}&\s{B}\qw&\qw&
	} \quad \forall\outcomeSingle{x}{X},
\end{equation*}
where $\rketSystem{\preparationEventNoDown{\rho}}{B} \in \St{B}$ is an arbitrary deterministic state of a nontrivial system $\system{A}$, and analogously for $\eventObservationTestSystem{b}{y}{Y}{A}$.
We conclude by observing that, if by contradiction either test $\test{T}{X}$ or $\test{G}{Y}$ does not exclude the other, the observation tests $\sequentialComp{\rbraSystem{\observationUniqueDeterministic}{B}}{\test{T}{X}}$ and $\sequentialComp{\rbraSystem{\observationUniqueDeterministic}{C}}{\test{G}{Y}}$ are compatible~\cite{article:incompatibility}, contradicting the hypothesis.

Then, whenever a theory has incompatibility, there must exist at least a pair of tests with nontrivial output that exclude each other,
%---specifically, the ones associated to the effects as in \eqref{eqt:formula1}---
thus being intrinsically irreversible. In summary, \emph{incompatibility of observations implies irreversibility of measurement disturbance}.

\subsection{Irreversibility does not imply incompatibility}
We now proceed to prove the second main result, by exhibiting a toy theory called minimal classical theory (MCT) that has irreversibility but no incompatibility. This theory is obtained by restricting the sets of allowed transformations and tests of CT, while keeping its sets of states and effects untouched. More in detail, the only allowed tests (and consequently transformations) are the ones that can be obtained combining preparation tests and observation tests with the identity and swap operations (and limits of sequences of tests thereof).

MCT is an instance of a broader family of OPTs that can be analogously obtained: Starting from an OPT, one can build its minimal version by only allowing preparation tests and observation tests, permutations of systems, and arbitrary compositions or limits thereof. These theories are called \emph{minimal} OPTs. In \aref{app:mopt}, the formal definition of this family of theories is presented together with a series of results characterising their transformations and their properties. As for MCT, the definition and formalisation of the results discussed below are presented in \aref{app:mct}.

Causal classical theories are here defined as OPTs where the state spaces are simplexes whose vertices (pure states) are jointly perfectly discriminable~\cite{article:classicalEntanglement,article:BCT}. We can now review some aspects of MCT, actually referring to results that hold for arbitrary minimal OPTs. The tests of a minimal OPT---with the exclusion of (some of) the limit tests---are of the form
\begin{equation}
\label{eqt:jelly}
	\genericT{\prim{A}}{\prim{B}}{E}{C}{\eventPreparationTestNoKet{\rho}{x}{X}}{\eventObservationTestNoBra{a}{y}{Y}}{\Braid^{\left(1\right)}}{\Braid^{\left(2\right)}},
\end{equation}
where $\eventPreparationTestSystem{\rho}{x}{X}{C\prim{B}} \in \Prep{C\prim{B}}$ and $\eventObservationTestSystem{a}{y}{Y}{C\prim{A}} \in \Obs{C\prim{A}}$ are generic preparation tests and observation tests, and $\Braid^{\left(1\right)}$ and $\Braid^{\left(2\right)}$ are generic permutations (see \aref{app:permutations}).\footnote{More precisely, systems $\system A$ and $\system B$ are generally composite $\system A=\system{A_1A_2\ldots A_j}$, $\system B=\system{B_1B_2\ldots B_k}$, and $\Braid^{\left(1\right)}$ permutes the subsystems of $\system A$, while $\Braid^{\left(2\right)}$ permute those of system $\system B$.} Notice that there is some degree of arbitrariness in the choice of the systems $\system{\prim{A}}$, $\system{\prim{B}}$, $\system{C}$, $\system{E}$, and in some cases they can be taken as the trivial system \trivialSystem. 
As a consequence of the realisation scheme of tests in Eq.~\eqref{eqt:jelly}, MCT is such that the identity transformation is atomic---i.e., every test whose full coarse graining is equal to $\identityTest{}$ must be of the form $\{p_x\identityTest{}\}_{x\in\outcomeSpace X}$, with $\{p_x\}_{x\in\outcomeSpace X}$ a probability distribution---for every one of its systems.
%%In Ref.~\cite{article:informationDisturbance} it was proven that the atomicity of the identity transformationn---i.e., every test whose full coarse graining is equal to $\identityTest{}$ must be of the form $\{p_x\identityTest{}\}_{x\in\outcomeSpace X}$, with $\{p_x\}_{x\in\outcomeSpace X}$ a probability distribution---implies no-information without disturbance.
%In Ref.~\cite{article:informationDisturbance} it was proved\new{ @@@ in full generality @@@} that no-information without disturbance is equivalent to the atomicity of the identity transformation---i.e., every test whose full coarse graining is equal to $\identityTest{}$ must be of the form $\{p_x\identityTest{}\}_{x\in\outcomeSpace X}$, with $\{p_x\}_{x\in\outcomeSpace X}$ a probability distribution. %In \aref{app:NIWD} we provide the proof of the above mentioned theorem in the special case of causal OPTs.
%The proof that MCT satisfies no-information without disturbance proceeds as follows. For the details, we refer to the \aref{app:mct-NIWD}.
%~\cite{supMat}.

The proof of the preceding property proceeds as follows. First, suppose that there is some test that decomposes $\identityTest{A}$. This test is the limit of some sequence $\test{T}{X}^{(n)}$ of tests of the form~\eqref{eqt:jelly}. The important fact here is that the arbitrary systems $\system{A}'_{n}$, $\system{B}'_{n}$, $\system{E}_{n}$, as well as the permutations $\Braid^{\left(1\right)}_n$ and $\Braid^{\left(2\right)}_n$, for the tests $\test{T}{X}^{(n)}$ in the sequence can be taken to be independent of $n$. Then, the full coarse graining of the limit test---that coincides with the limit of the sequence of full coarse grainings---is of the form of Eq.~\eqref{eqt:jelly} where the observation test $\eventObservationTestNoBra{a}{y}{Y}$ reduces to the deterministic effect $\rbraSystem{\observationUniqueDeterministic}{\system{C}\system{A}'} \in \Eff{C\prim{A}}$.
Since in our case $\system A\equiv\system B$, it must also be $\system A'\equiv\system B'$. Moreover, since the overall transformation must be the identity, one can easily check that it must be $\Braid^{\left(2\right)}=[\Braid^{\left(1\right)}]^{-1}$. In summary, one must have
\begin{align}
	\label{eqt:detJelly}
	&\minimalDeterministicCausalDestroyReprep{A}{A}{\prim{A}}{\prim{A}}{E}{\Braid^{\left(1\right)}}{\inverse{\Braid^{\left(1\right)}}}{\preparationEventNoDown{{\rho}}} \!\!\!\!=\ \  \myQcircuit{
		&\s{A}\qw&\qw&
	},
\end{align}
and finally, inverting the permutations on both sides, we end up with
\begin{align*}
	& \myQcircuit{
		&\s{A'}\qw&\measureD{\observationUniqueDeterministic}&\pureghost{}&\prepareC{\preparationEventNoDown{{\rho}}}&\s{A'}\qw&\qw&
		\\
		&\qw&\qw&\s{E}\qw&\qw&\qw&\qw&
	} = \quad \myQcircuit{
		&\qw&\s{A'E}\qw&\qw&\qw&
	},
\end{align*}
which then requires $\system{A}' \equiv\trivialSystem$. By the stability of the systems $\system{A}'_{n}$, $\system{B}'_{n}$, $\system{E}_{n}$, also the sequence of tests of the form~\eqref{eqt:jelly} converging to our decomposition of $\identityTest{A}$ must have trivial systems $\system A'\equiv\system B'\equiv\system I$. As a consequence, all such tests must contain transformations proportional to $\identityTest{A}$, and so must the limit test.

We finally prove that every theory admitting at least a system of dimension greater than $1$ for which the identity transformation is atomic, and MCT as a particular case, has irreversibility. This is shown by contradiction. Suppose that a theory has no intrinsically irreversible tests. Then any test $\test{A}{X}\in\testCollectionAB{A}{B}$ does not exclude the identity and is achievable via a test $\eventTest{C}{z}{Z}\in\testCollectionAB{A}{BE}$ such that Eq.~\eqref{eq:decompb} holds with $\eventNoDown{B}_y$ replaced by the identity $\identityTest{A}$. Suppose \system{A} is a system of dimension greater than one where the identity transformation is atomic, due to this, one has $\rbraSystem{e}{BE}\event{C}{z}=p_z\rbraSystem{e}{A}$, which means that the observation test associated to $\test{A}{X}\in\testCollectionAB{A}{B}$ is of the 
form $\{p_z \observationUniqueDeterministic\}_{z\in\outcomeSpace Z}$, namely it is trivial. Since this is true for every test, all observation tests of system \system{A} must be trivial, which is possible only if this system has dimension equal to $1$, thus reaching a contradiction. On the other hand, MCT admits systems of dimension greater than $1$. Therefore, since for every system of MCT the identity transformation is atomic, it has irreversibility. To conclude the argument, it is sufficient to observe that MCT does not have incompatibility, since it has the same observation tests as CT, where all the observations are compatible. A detailed proof of the previous statement is given in \aref{app:mct:compatibility}.\\
We conclude by observing that the fact that the identity transformation is atomic for any of MCT's systems sets this theory apart from CT. In fact, the latter theory satisfies the opposite property; any system of the theory always admits a test that is a nontrivial decomposition of the identity. Hence MCT's test set is \emph{strictly} contained in CT's one.

\section{Discussion} In this \letter{} we have proven that, in a general theory of physical systems, the presence of incompatible observations implies the existence of tests that are intrinsically irreversible, but the reverse does not hold. The counterexample is given in terms of a fully fledged OPT, that we named MCT, whose state spaces are simplexes. Incidentally, in Ref.~\cite{art:plavala_simplex} it is proven that, under the no-restriction hypothesis, the compatibility of all observation tests is equivalent to having simplicial state spaces. At any rate, simpliciality is sufficient for a theory to exhibit full compatibility of observations, and yet, remarkably, it does not preclude the presence of irreversible disturbance, as we have shown here. Indeed, the fact that a theory such as MCT exists is not straightforward.
%\new{In particular, MCT is strictly different from CT, since the experimental procedures described within the latter cannot be approximated by the former.}

Notice that it is reasonable to expect that MCT is not the unique theory with full compatibility of observations and irreversibility. Moreover, it is not even clear that such a theory must be simplicial, as suggested e.g., in Ref.~\cite{Buscemi2023unifyingdifferent}, where a possible example is sketched of a theory made of quantum systems whose unique allowed observation test corresponds to an informationally complete POVM.

The notion of intrinsic irreversibility has been also introduced in an operational framework, and characterised as the existence of 
tests that cannot be postprocessed to the identity---not even with access to arbitrary ancillary systems. The consequent notion of irreversibility---i.e., the property of a theory with an intrinsically irreversible test---is very restrictive, and one may conjecture that it lies at the origin of thermodynamic irreversibility. The analysis of this hypothesis will be the subject of future studies.

The toy theory presented here, exhibiting irreversibility but also full compatibility of observations, can be used to compare other features which are beyond the scope of this \letter{} as well. For example, MCT establishes that classicality is not sufficient for a theory to have full compatibility of tests, where the latter is defined according to Refs.~\cite{article:incompatibility,Buscemi2023unifyingdifferent}. Moreover, MCT satisfies not only no-information without disturbance as follows from the atomicity of the identity transformation~\cite{article:informationDisturbance},but also generalised no-broadcasting~\cite{PhysRevLett.99.240501,barnum2006cloning}, as follows from the form of the channels of the theory. In particular, as a special case of the latter property, MCT also satisfies no-cloning. Accordingly, MCT represents the evidence that the properties of no-information without disturbance and of no-broadcasting are \emph{not} signatures of nonclassicality \emph{per se}. 

We highlight that the fact that MCT satisfies no-broadcasting is not in contradiction with the results of Refs.~\cite{PhysRevLett.99.240501,barnum2006cloning}.
%The latter works suppose that any conditioned test---i.e., composed tests where the outcome of the first conditions what operation is performed subsequently---is an allowed operation of a theory. However, this hypothesis that does not hold in our work. Having thus established that classicality is not in itself a sufficient condition for broadcasting, it would be of interest to determine under which set of further assumptions this would be true; for example, by determining whether the addition of conditional tests would be sufficient.
Indeed, one of the underlying assumptions of the latter works is the possibility of having classical control on outcomes, or, in other words, the possibility of choosing which test to perform conditionally on which outcome has occurred in a preceding test. However, such an assumption was not made in the present work. This establishes that classicality---understood as the joint perfect distinguishability of the pure states---is not in itself a sufficient condition for broadcasting. Now, it would be interesting to determine under which assumptions classicality entails the possibility of generalised broadcasting, e.g., by determining whether the sole addition of the above-mentioned conditional tests would be sufficient.\footnote{We remark that the property of classical control on outcomes is actually quite strong, implying e.g., the causality principle (see \hyperref[sec:defs]{Section~\ref{sec:defs}} and Ref.~\cite{book:quantumTheoryFromFirstPrinciples}).}

As a final remark, we observe that MCT does not have complementarity nor preparation uncertainty relations, such as Robertson's ones~\cite{article:robertson}, which shows that those are not implied by irreversibility, just as incompatibility is not. It is yet unknown whether or not the converse of the above statement holds.

\bigskip

%\textcolor{red}{
%PROPOSALS:
%\begin{enumerate}
%	\item Tiding up the Appendices and references thereon.
%	\item Proof of \autoref{thm:OPT:instr:NIWD}: existence of atomic elements and of a decomposition into atomic elements.
%	\item Cite at least \autoref{thm:OPT:instr:NIWD} in main text.
%\end{enumerate}
%}

\section{Acknowledgements}
%P.P.~acknowledges financial support from PNRR MUR Project PE0000023. 
P.P.~acknowledges financial support from European Union - Next Generation EU through the PNNR MUR Project No.~PE0000023. A.T.~acknowledges the financial support of Elvia and Federico Faggin Foundation (Silicon Valley Community Foundation Project ID No.~2020-214365). M.E.~acknowledges financial support from the National Science Centre, Poland (Opus project, Categorical Foundations of the Non-Classicality of Nature, Project No.~2021/41/B/ST2/03149).

%\begin{widetext}
\widetext
\appendix

\section{Incompatibility and intrinsic irreversibility}\label{app:irreversibility}

%Regarding the property of tests of excluding one another,
In the present appendix we provide the formal proof that whenever a test does not exclude the identity, then it does not exclude any other test.

\begin{lemma}
	\label{lem:OPT:instr:incompatibility:weak:weakEquiv:1}
	For any two given tests $\test{A}{X} = \eventTest{A}{x}{X} \in \testCollectionA{A}$ and $\test{B}{Y} = \eventTest{B}{y}{Y} \in \testCollectionAB{A}{B}$, if $\test{A}{X}$ does not exclude the identity, then it also does not exclude $\test{B}{Y}$.
\end{lemma}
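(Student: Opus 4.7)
The plan is to argue directly from the definition of exclusion, by exhibiting, for every target test $\test{B}{Y}$, a concrete realisation that witnesses that $\test{A}{X}$ does not exclude it, obtained by dressing up the assumed realisation of the identity. By hypothesis, $\test{A}{X}$ does not exclude the identity test $\{\identityTest{A}\}$, so by definition there exist a test $\test{C}{Z} \in \testCollectionAB{A}{AE}$, a disjoint partition $\{S_x\}_{x \in \outcomeSpace{X}}$ of $\outcomeSpace{Z}$ for which Eq.~\eqref{eq:decompa} holds, together with deterministic postprocessings $\eventNoDown{Q}^{(z)} \in \testCollectionAB{AE}{A}$ (deterministic because the identity test has trivial outcome space) satisfying
$$\identityTest{A} \;=\; \sum_{z \in \outcomeSpace{Z}} \eventNoDown{Q}^{(z)} \, \eventNoDown{C}_z .$$

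Given an arbitrary $\test{B}{Y} \in \testCollectionAB{A}{C}$, I would recycle the same $\test{C}{Z}$ and the same partition $\{S_x\}$, so that Eq.~\eqref{eq:decompa} is automatically retained for $\test{A}{X}$. As the new conditional postprocessing associated to $\test{B}{Y}$ I would take $\eventNoDown{P}^{(z)}_y := \eventNoDown{B}_y \circ \eventNoDown{Q}^{(z)}$ for each $z \in \outcomeSpace{Z}$ and $y \in \outcomeSpace{Y}$. For any fixed $z$, the collection $\{\eventNoDown{P}^{(z)}_y\}_{y \in \outcomeSpace{Y}}$ is just the sequential composition of the single-outcome test $\{\eventNoDown{Q}^{(z)}\}$ with the test $\test{B}{Y}$, which is a legitimate test in $\testCollectionAB{AE}{C}$ by the compositional rules of the OPT framework (and, in particular, the assumption that a different followup test may be chosen on each branch $z$, which is implicit in the form of Eq.~\eqref{eq:decompb}).

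Verifying Eq.~\eqref{eq:decompb} with this choice is then immediate from the bilinearity of sequential composition:
$$\sum_{z \in \outcomeSpace{Z}} \eventNoDown{P}^{(z)}_y \, \eventNoDown{C}_z \;=\; \eventNoDown{B}_y \circ \sum_{z \in \outcomeSpace{Z}} \eventNoDown{Q}^{(z)} \, \eventNoDown{C}_z \;=\; \eventNoDown{B}_y \circ \identityTest{A} \;=\; \eventNoDown{B}_y$$
for every $y \in \outcomeSpace{Y}$, whence $\test{A}{X}$ does not exclude $\test{B}{Y}$. I do not anticipate any substantive obstacle here: the entire content of the lemma is the observation that, once one has a realisation $\test{C}{Z}$ of $\test{A}{X}$ that can be postprocessed back to $\identityTest{A}$ on $\system{A}$, one may always append an arbitrary further test $\test{B}{Y}$ after that reconstruction, turning the witness for non-exclusion of the identity into a witness for non-exclusion of $\test{B}{Y}$ with no modification of the underlying realisation $\test{C}{Z}$.
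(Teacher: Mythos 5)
Your proposal is correct and follows essentially the same route as the paper's proof: take the realisation $\test{C}{Z}$ and postprocessing that recover $\identityTest{A}$, compose that postprocessing with $\eventNoDown{B}_y$ to obtain the new conditional postprocessing, and conclude by linearity of sequential composition (the paper writes this as $\mathsf{P'}^{(z)}_{\mathsf Y}:=\mathsf{B}_{\mathsf Y}\circ\mathsf{P}^{(z)}_{\mathsf Y}$, with the ancillary output discarded by the deterministic effect). The only cosmetic difference is that the paper keeps the discarded ancilla of the identity's postprocessing explicit in the circuits, whereas you absorb it into $\eventNoDown{Q}^{(z)}$; the argument is the same.
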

\begin{proof}	
	Writing the non-exclusion relation of $\test{A}{X}$ with the identity as
	\begin{align*}
		& \myQcircuit{
			&\s{A}\qw&\gate{\event{A}{x}}&\s{A}\qw&\qw&
		} = \sum_{z\in \outcomeSpaceConditioned{S}{x}}
		\myQcircuit{
			&\s{A}\qw&\multigate{1}{\event{C}{z}}&\s{A}\qw&\qw&\qw&
			\\
			&\pureghost{}&\pureghost{\event{C}{z}}&\s{\prim{A}}\qw&\measureD{\observationUniqueDeterministic}&
		}, \\[10pt]
		& \myQcircuit{
			&\s{A}\qw&\qw&
		} = \sum_{z}
		\myQcircuit{
			&\s{A}\qw&\multigate{1}{\event{C}{z}}&\s{A}\qw&\multigate{1}{\conditionedEvent{P}{y}{z}}&\s{A}\qw&\qw&\qw&
			\\
			&\pureghost{}&\pureghost{\event{C}{z}}&\s{\prim{A}}\qw&\ghost{\conditionedEvent{P}{l}{z}}&\s{\secondE{A}}\qw&\measureD{\observationUniqueDeterministic}&
		},
	\end{align*}
	the result then follows by taking
	\begin{align*}
		\myQcircuit{
			&\s{A}\qw&\gate{\event{B}{y}}&\s{B}\qw&\qw&
		} & = \sum_{z}
		\myQcircuit{
			&\s{A}\qw&\multigate{1}{\event{C}{z}}&\s{A}\qw&\multigate{1}{\conditionedEvent{P}{y}{z}}&\s{A}\qw&\gate{\event{B}{y}}&\s{B}\qw&\qw&
			\\
			&\pureghost{}&\pureghost{\event{C}{z}}&\s{\prim{A}}\qw&\ghost{\conditionedEvent{P}{y}{z}}&\s{\secondE{A}}\qw&\measureD{\observationUniqueDeterministic}&
		} \\[10pt]
		& = \sum_{z} \myQcircuit{
			&\s{A}\qw&\multigate{1}{\event{C}{z}}&\s{A}\qw&\multigate{1}{\event{P'}{y}^{(z)}}&\s{B}\qw&\qw&\qw&
			\\
			&\pureghost{}&\pureghost{\event{C}{z}}&\s{\prim{A}}\qw&\ghost{\event{P'}{y}^{(z)}}&\s{\secondE{A}}\qw&\measureD{\observationUniqueDeterministic}&
		},
	\end{align*}
	where we defined the new postprocessing $\mathsf{P'}^{(z)}_{\mathsf Y}:=\mathsf{B}_{\mathsf Y}\circ\mathsf{P}^{(z)}_{\mathsf Y}\in\testCollectionAB{AA'}{BA''}$.
\end{proof}

\section{Permutations and their properties}\label{app:permutations}
In the present appendix we will discuss the particular set of reversible transformations called permutations.

\begin{definition}[Set of permutations] 
	\label{def:permutations}
	The \textdef{set of permutations}, whose representatives will be indicated with $\Braid$, is defined as the equivalence class of transformations which are obtained by parallel and sequential composition of swap and identity transformations.
\end{definition} 

The above defined transformations on bipartite systems satisfy the following characterisation theorem:

\begin{theorem}[General form of permutations on bipartite systems]
	\label{thm:OPT:symmetric:permutations:generalForm}
	In every OPT for any permutation acting on a bipartite system, there exist suitable systems $\system{A}'$, $\system{B}'$, $\system{A}''$, $\system{B}''$, and transformations $\Braid_{1}$, $\Braid_{2}$, $\Braid_{3}$, $\Braid_{4}$ such that
	\begin{equation}
		\label{eqt:OPT:permutation:symm:formula}
		\myQcircuitSupMat{
			&\s{A}\qw&\multigate{1}{\Braid}&\s{C}\qw&\qw&
			\\
			&\s{B}\qw&\ghost{\Braid}&\s{D}\qw&\qw&
		} =
		\myQcircuitSupMat{
			&\s{A}\qw&\multigate{1}{\Braid_{3}}&\qw&\s{\prim{A}}\qw&\qw&\multigate{1}{\Braid_{4}}&\s{C}\qw&\qw&
			\\
			&\pureghost{}&\pureghost{\Braid_{3}}&\s{\secondE{A}}\qw&\braidingSym&\s{\prim{B}}\qw&\ghost{\Braid_{4}}\qw&
			\\
			&\s{B}\qw&\multigate{1}{\Braid_{1}}&\s{\prim{B}}\qw&\braidingGhost&\s{\secondE{A}}\qw&\multigate{1}{\Braid_{2}}&\s{D}\qw&\qw&
			\\
			&\pureghost{}&\pureghost{\Braid_{1}}&\qw&\s{\secondE{B}}\qw&\qw&\ghost{\Braid_{2}}&
		}, 
	\end{equation}
	where \system{A}, \system{B} are generic systems of the theory and \system{C}, \system{D} are systems such that \system{CD} is isomorphic to \system{AB}.
\end{theorem}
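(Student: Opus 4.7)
The plan is to exploit the generative definition of permutations (Definition~\ref{def:permutations}): since every permutation is built by sequential and parallel composition of swaps and identities, it acts by rearranging the atomic (indecomposable) subsystems that together form its input and output, and is fully determined by the induced type-preserving bijection $\sigma$ on these atomic wires. The strategy is therefore to define $\Braid_{1},\Braid_{2},\Braid_{3},\Braid_{4}$ so that the right-hand side of Eq.~\eqref{eqt:OPT:permutation:symm:formula} realises the same $\sigma$ as $\Braid$.

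First I would partition the atomic wires according to their destinations under $\sigma$. Let $\system{A}'$ collect those wires of $\system{A}$ that $\sigma$ maps into $\system{C}$, and $\system{A}''$ those mapped into $\system{D}$; similarly split $\system{B}$ as $\system{B}'\system{B}''$, with $\system{B}'$ sent to $\system{C}$ and $\system{B}''$ to $\system{D}$. Since $\sigma$ preserves wire types and $\system{CD}$ is isomorphic to $\system{AB}$, the system $\system{C}$ is a reordering of the wires of $\system{A}'\system{B}'$ while $\system{D}$ is a reordering of those of $\system{A}''\system{B}''$.

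Second, I would construct the four local permutations explicitly. Choose $\Braid_{3}:\system{A}\to\system{A}'\system{A}''$ and $\Braid_{1}:\system{B}\to\system{B}'\system{B}''$ to be the wire-level reorderings implementing the grouping by destination; these exist because the symmetric group on any finite sequence of atomic wires is generated by adjacent transpositions, each realisable as a swap composed in parallel with identities. After applying $\Braid_{1}\paralC\Braid_{3}$ followed by the central swap between $\system{A}''$ and $\system{B}'$, the wires destined for $\system{C}$ occupy the top block $\system{A}'\system{B}'$ and those destined for $\system{D}$ occupy the bottom block $\system{A}''\system{B}''$. I would then define $\Braid_{4}:\system{A}'\system{B}'\to\system{C}$ and $\Braid_{2}:\system{A}''\system{B}''\to\system{D}$ as the wire-level reorderings placing each wire in the slot prescribed by $\sigma$ within $\system{C}$ and $\system{D}$ respectively.

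The final step is to verify that both sides of Eq.~\eqref{eqt:OPT:permutation:symm:formula} induce the same bijection on atomic wires, whence they coincide by Definition~\ref{def:permutations}. Tracing a generic wire $w$ through the constructed circuit: $\Braid_{1}$ or $\Braid_{3}$ routes $w$ to its correct half; the central swap moves it across when $w$ belongs to $\system{A}''$ or $\system{B}'$; finally, $\Braid_{2}$ or $\Braid_{4}$ places $w$ in the slot $\sigma(w)$ inside $\system{C}$ or $\system{D}$. The main technical obstacle is rigorously formalising the atomic-wire picture in the abstract OPT framework; I would handle this by structural induction on a generating expression for $\Braid$, reducing the claim to a computation in the symmetric group acting on the atomic subsystems of $\system{AB}$.
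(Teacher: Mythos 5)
Your proposal is correct and follows essentially the same route as the paper's own proof: partition the atomic subsystems of $\system{A}$ and $\system{B}$ by their destination ($\system{C}$ or $\system{D}$) under the permutation, use $\Braid_{3}$ and $\Braid_{1}$ to group them accordingly, apply the central swap between $\system{A}''$ and $\system{B}'$, and let $\Braid_{4}$ and $\Braid_{2}$ perform the final reordering, concluding because a permutation is fully determined by how it rearranges its input subsystems. The paper likewise invokes this last characterisation without further formalisation, so your closing remark about structural induction is a harmless refinement rather than a divergence.
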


\begin{proof}
	Let us start by considering the ordered decomposition of \system{AB} in the set of subsystems on which $\Braid$ acts:
	\begin{equation*}
		\{\system{A}_{1},\dots,\system{A}_{n},\system{B}_{1},\dots,\system{B}_{m}\}.
	\end{equation*} 
	The action of $\Braid$ is to permute the order of these subsystems:
	\begin{equation*}
		\{\system{A}_{1},\dots,\system{A}_{n},\system{B}_{1},\dots,\system{B}_{m}\}  
	\end{equation*}
	\begin{equation*}
		\downarrow \Braid
	\end{equation*}
	\begin{equation*}
		\{\sigma\left( \system{A}_{1} \right), \dots, \sigma\left( \system{A}_{n} \right),\sigma\left( \system{B}_{1} \right),\dots,\sigma\left( \system{B}_{m} \right)\} = \{\system{C}_{1},\dots,\system{C}_{l},\system{D}_{1},\dots,\system{D}_{k}\}.
	\end{equation*}
	If we now define $N = \left\{ 1, \dots, n \right\}, M = \left\{ 1, \dots, m \right\}, L = \left\{ 1, \dots, l \right\}, K = \left\{ 1, \dots, k \right\}$, the most general transformation that can happen due to the action of $\Braid$ is that
	\begin{equation*}
		\begin{aligned}
			&\left\{ \system{A}_{i}\right\} _{i \in \prim{N}} \transfArrow{\Braid} \left\{ \system{C}_{i}\right\} _{i \in \prim{L}},\\
			&\left\{ \system{A}_{j}\right\} _{j \in \secondE{N}} \transfArrow{\Braid} \left\{ \system{D}_{j}\right\}_{j \in \prim{K}},
		\end{aligned}
	\end{equation*}
	where $N = \prim{N} \bigcup \secondE{N}, \#\prim{N} = \#\prim{L}, \#\secondE{N} = \#\prim{K}$ and analogously for \system{B},
	\begin{equation*}
		\begin{aligned}
			&\left\{ \system{B}_{i}\right\} _{i \in \prim{M}} \transfArrow{\Braid} \left\{ \system{C}_{i}\right\} _{i \in \secondE{L}},\\
			&\left\{ \system{B}_{j}\right\} _{j \in \secondE{M}} \transfArrow{\Braid} \left\{ \system{D}_{j}\right\}_{j \in \secondE{K}},
		\end{aligned}
	\end{equation*}
	where $M = \prim{M} \bigcup \secondE{M}, \#\prim{M} = \#\secondE{L}, \#\secondE{M} = \#\secondE{K}$, and overall $L = \prim{L} \bigcup \secondE{L}, K = \prim{K} \bigcup \secondE{K}$. With $\# S$ we denote the cardinality of the set $S$.\\
	Now we want to show that this permutation can always be achieved thorough a transformation with the same form as that of \eqref{eqt:OPT:permutation:symm:formula}.\\
	We begin by observing that in the case of system \system{A} one can always find a permutation that reorganizes the systems in such a way that the ones that are mapped into states of \system{C} are on the top and the ones that are mapped into \system{D} are on the bottom,
	\begin{equation*}
		\myQcircuitSupMat{
			&\s{A}\qw&\gate{\Braid_{3}}&\qw&\sEnsembleDouble{A}{i}{N'}{j}{N''}\qw&\qw&\splitter&\qw&\sEnsemble{A}{i}{N'}\qw&\qw&\qw&
			\\
			&\pureghost{}&\pureghost{\Braid_{3}}&\pureghost{}&\pureghost{}&\pureghost{}&\splitterGhost&\qw&\sEnsemble{A}{j}{N''}\qw&\qw&\qw&
		},
	\end{equation*}
	where the ordering of the $\system{A}_{i}$ and $\system{A}_{j}$ is not important. We can then suppose that the same happens also to the subsystems of \system{B},
	\begin{equation*}
		\myQcircuitSupMat{
			&\s{B}\qw&\gate{\Braid_{1}}&\qw&\sEnsembleDouble{B}{i}{M'}{j}{M''}\qw&\qw&\splitter&\qw&\sEnsemble{B}{i}{M'}\qw&\qw&\qw&
			\\
			&\pureghost{}&\pureghost{\Braid_{1}}&\pureghost{}&\pureghost{}&\pureghost{}&\splitterGhost&\qw&\sEnsemble{B}{j}{M''}\qw&\qw&\qw&
		}.
	\end{equation*}
	Now we have to take the subsystems of \system{A} that are mapped into \system{D} and move them down, and vice versa for the ones of \system{B} that are mapped into \system{C}. This can be achieved by swapping $\left\{ \system{A}_{j}\right\}_{j \in \secondE{N}}$ with $\left\{ \system{B}_{i}\right\} _{i \in \prim{M}}$:
	\begin{equation*}
		\myQcircuitSupMat{
			&\s{A}\qw&\multigate{1}{\Braid_{3}}&\qw&\qw&\sEnsemble{A}{i}{N'}\qw&\qw&\qw&\qw&
			\\
			&\pureghost{}&\pureghost{\Braid_{3}}&\qw&\sEnsemble{A}{j}{N''}\qw&\braidingSym&\sEnsemble{B}{i}{M'}\qw&\qw&\qw&
			\\
			&\s{B}\qw&\multigate{1}{\Braid_{1}}&\qw&\sEnsemble{B}{i}{M'}\qw&\braidingGhost&\sEnsemble{A}{j}{N''}\qw&\qw&\qw&
			\\
			&\pureghost{}&\pureghost{\Braid_{1}}&\qw&\qw&\sEnsemble{B}{j}{M''}\qw&\qw&\qw&\qw&
		}.
	\end{equation*}
	Now to conclude we need only to add two permutations $\Braid^{2}, \Braid^{4}$, that can always be found, to correctly reorder the subsystems to obtain \system{C} and \system{D}:
	\begin{equation*}
		\myQcircuitSupMat{
			&\s{A}\qw&\multigate{1}{\Braid_{3}}&\qw&\qw&\sEnsemble{A}{i}{N'}\qw&\qw&\qw&\multigate{1}{\Braid_{4}}&\s{C}\qw&\qw&
			\\
			&\pureghost{}&\pureghost{\Braid_{3}}&\qw&\sEnsemble{A}{j}{N''}\qw&\braidingSym&\sEnsemble{B}{i}{M'}\qw&\qw&\ghost{\Braid_{4}}\qw&
			\\
			&\s{B}\qw&\multigate{1}{\Braid_{1}}&\qw&\sEnsemble{B}{i}{M'}\qw&\braidingGhost&\sEnsemble{A}{j}{N''}\qw&\qw&\multigate{1}{\Braid_{2}}&\s{D}\qw&\qw&
			\\
			&\pureghost{}&\pureghost{\Braid_{1}}&\qw&\qw&\sEnsemble{B}{j}{M''}\qw&\qw&\qw&\ghost{\Braid_{2}}&
		}.
	\end{equation*}
	
	Therefore we have shown that, for any permutation $\Braid$, it is always possible to find a transformation such as the one in \eqref{eqt:OPT:permutation:symm:formula} that permutes the systems as $\Braid$. From the fact that permutations can be completely characterized by how they permute its input systems, then the equality between the two transformations follows.\\
	
\end{proof}

\begin{remark}
	\label{obs:OPT:permutation:generalDecomposition:empty}
	We highlight that, in general, in the preceding theorem \system{A}, \system{B}, \system{C}, \system{D} can be the trivial system and this holds also for \system{\prim{A}}, \system{\secondE{A}}, \system{\prim{B}}, \system{\secondE{B}}.
\end{remark}

\section{Minimal operational probabilistic theories}\label{app:mopt}
%In order to be as precise as possible in the definition of this class of theories, in the following discussion we will explicitly state that we are only considering the case of symmetric theories, i.e., theories where the swap operation is equal to its inverse, which is the setting of the \letter{}. A generalization of the notion of OPT where the previous condition is violated can be made, see for example Refs.~\cite{book:quantumFromPrinciples, book:quantumTheoryFromFirstPrinciples, article:BCT} for the more general definition.

\begin{definition}[Minimal operational probabilistic theory (MOPT)]
	\label{def:OPT:minimal}
	We define as MOPT an OPT where the only allowed tests are the ones obtainable by composing the elements of
	\begin{equation}
		\label{eqt:OPT:minimal:def:test}
		\left\{ 
		\myQcircuit{
			&\s{A}\qw&\qw&
		}
		\quad , \quad
		\myQcircuit{
			&\s{A}\qw&\braidingSym&\s{B}\qw&\qw&
			\\
			&\s{B}\qw&\braidingGhost&\s{A}\qw&\qw&
		}
		\quad , \quad
		\myQcircuit{
			&\prepareC{\eventPreparationTestNoKet{\rho}{i}{I}}&\s{A}\qw&\qw&
		}
		\quad , \quad
		\myQcircuit{
			&\s{A}\qw&\measureD{\eventObservationTestNoBra{a}{j}{J}}&
		}
		\right\},
	\end{equation} 
	where $\eventPreparationTest{\rho}{i}{I}$ and $\eventObservationTest{a}{j}{J}$ are all the possible preparation test and observation tests of the theory, and the limits of all the Cauchy sequences of tests of this type. Thus the only allowed transformations are those obtainable by sequential and parallel composition of the elements of
	\begin{equation}
		\label{eqt:OPT:minimal:def:event}
		\left\{ 
		\myQcircuit{
			&\s{A}\qw&\qw&
		}
		\quad , \quad
		\myQcircuit{
			&\s{A}\qw&\braidingSym&\s{B}\qw&\qw&
			\\
			&\s{B}\qw&\braidingGhost&\s{A}\qw&\qw&
		}
		\quad , \quad
		\myQcircuit{
			&\prepareC{\rho}&\s{A}\qw&\qw&
		}
		\quad , \quad
		\myQcircuit{
			&\s{A}\qw&\measureD{a}&
		}
		\right\},
	\end{equation}
	for every $\system{A}, \system{B} \in \Sys{\OPT}$, $\rket{\rho} \in \St{A}$, and $\rbra{a} \in \Eff{A}$, and the limits of all the Cauchy sequences of transformations of this type that belong to a test of the theory.
\end{definition}
We observe that these are the minimum requirements that can be made on an OPT to cope with the required compositional structure and the Cauchy completeness. In other words, if any of the elements of \eqref{eqt:OPT:minimal:def:test}, or equivalently \eqref{eqt:OPT:minimal:def:event}, or of the limits were removed, the theory could no longer be classified as an OPT.

%\begin{remark}
%	The last proposed definition would also be generalizable to the case of non-symmetric OPTs.
%\end{remark}

\begin{theorem}
	\label{thm:OPT:minimal:symmetric:typeD:generalTransf}
	In every MOPT any transformation $\eventNoDown{T} \in \Transf{A}{B}$ obtained as a parallel and sequential composition of the elements of \eqref{eqt:OPT:minimal:def:event} is of the form
	\begin{equation}
		\label{eqt:OPT:minimal:symmetric:transf:generic}
		\myQcircuit{
			&\s{A}\qw&\gate{\eventNoDown{T}}&\s{B}\qw&\qw&
		} =
		\myQcircuit{
			&\pureghost{}&\multiprepareC{1}{\preparationEventNoDown{\rho}}&\qw&\s{C}\qw&\qw&\multimeasureD{1}{\observationEventNoDown{a}}&
			\\
			&\pureghost{}&\pureghost{\preparationEventNoDown{\rho}}&\s{\prim{B}}\qw&\braidingSym&\s{\prim{A}}\qw&\ghost{\observationEventNoDown{a}}&
			\\
			&\s{A}\qw&\multigate{1}{\Braid_{1}}&\s{\prim{A}}\qw&\braidingGhost&\s{\prim{B}}\qw&\multigate{1}{\Braid_{2}}&\s{B}\qw&\qw&
			\\
			&\pureghost{}&\pureghost{\Braid_{1}}&\qw&\s{E}\qw&\qw&\ghost{\Braid_{2}}&
		},
	\end{equation}
	where $\Braid_{1}, \Braid_{2} \in \RevTransfA{\OPT}$ are appropriate permutations, $\rketSystem{\preparationEventNoDown{\rho}}{C \prim{B}} \in \St{C \prim{B}}$, $\rbraSystem{\observationEventNoDown{a}}{C\prim{A}} \in \Eff{C \prim{A}}$, and \system{A}, \system{B}, $\system{\prim{A}}$, $\system{\prim{B}}$, $\system{C}$, $\system{E} \in \Sys{\OPT}$ may also be equal to the trivial system.
\end{theorem}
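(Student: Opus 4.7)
The plan is to proceed by structural induction on the compositional construction of $\eventNoDown{T}$ from the four generators appearing in Eq.~\eqref{eqt:OPT:minimal:def:event}, showing at each stage that the transformation admits a presentation of the shape \eqref{eqt:OPT:minimal:symmetric:transf:generic}. First I would verify that each generator already fits the template: for $\identityTest{A}$ one takes $\system{C}=\system{A}'=\system{B}'=\trivialSystem$, $\system{E}=\system{A}=\system{B}$, with $\Braid_{1},\Braid_{2}$ trivial and $\preparationEventNoDown{\rho},\observationEventNoDown{a}$ the units on $\trivialSystem$; for the swap one absorbs it into $\Braid_{1}$ (keeping every other block trivial); for a pure preparation $\rket{\rho}\in\St{B}$ one sets $\system{A}=\system{E}=\system{A}'=\system{C}=\trivialSystem$, $\system{B}'=\system{B}$; and for a pure effect $\rbra{a}\in\Eff{A}$ one sets $\system{A}'=\system{A}$, $\system{B}=\system{B}'=\system{E}=\system{C}=\trivialSystem$.

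Second, I would show that the template is stable under parallel composition. Given two transformations $\eventNoDown{T}$ and $\eventNoDown{T}'$ already written in the form \eqref{eqt:OPT:minimal:symmetric:transf:generic}, the composite $\parallelComp{\eventNoDown{T}}{\eventNoDown{T}'}$ is obtained by juxtaposing the two standard-form diagrams side by side, merging the two preparations into a single state on the parallel composition of the respective $\system{C}\system{B}'$ systems, and merging the two effects analogously. The resulting circuit carries, on the input and output side, the parallel composition of the generator permutations, which is still a permutation; invoking \autoref{thm:OPT:symmetric:permutations:generalForm} on each side packages these wirings into a single $\Braid_{1}$ and a single $\Braid_{2}$ of the required bipartite shape.

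Third---and this is the technically delicate step---I would show closure under sequential composition. For $\eventNoDown{T}\colon\system{A}\to\system{B}$ and $\eventNoDown{S}\colon\system{B}\to\system{D}$ both in standard form, I would draw the composite and then slide the preparation of $\eventNoDown{S}$ leftwards past the post-permutation $\Braid_{2}^{T}$, the swap, and the observation of $\eventNoDown{T}$, exploiting the fact that preparations and observations commute through permutations (parallel-wise naturality of swaps). In this way the two preparations coalesce into a single $\preparationEventNoDown{\rho}$ and the two observations coalesce into a single $\observationEventNoDown{a}$, both acting on a common ancilla system that plays the role of the new $\system{C}$ (carrying any ``communication'' between preparation and observation that arose from either factor). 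The residual wire pattern between the input $\system{A}$, the combined preparation, the combined observation, and the output $\system{D}$ is purely a permutation, which \autoref{thm:OPT:symmetric:permutations:generalForm} casts into the bipartite shape of Eq.~\eqref{eqt:OPT:permutation:symm:formula}; its two flanking blocks are then identified as the $\Braid_{1}$ and $\Braid_{2}$ of the final standard form.

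The main obstacle I anticipate is precisely the bookkeeping of the sequential step: the intermediate system $\system{B}$ is fed jointly by the $\system{B}'$ output of $\eventNoDown{T}$'s preparation and by the $\system{E}$ strand coming from the input $\system{A}$, and each of these strands may split further to supply either $\eventNoDown{S}$'s observation or the output $\system{D}$. Making it explicit that, after merging the preparations and the observations, the overall routing really factors into just two bipartite permutations sandwiching a single swap---rather than into a more complicated braid-like pattern---is exactly what \autoref{thm:OPT:symmetric:permutations:generalForm} is engineered to guarantee, so the crux of the argument is to reduce the sequential-composition bookkeeping to a repeated application of that theorem.
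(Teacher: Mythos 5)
Your proposal is correct, but it organizes the argument differently from the paper. The paper does not perform a structural induction with separate base cases and closure steps; instead it first establishes an intermediate normal form, Eq.~\eqref{eqt:OPT:minimal:symmetric:transf:part}, in which a \emph{single} preparation and a \emph{single} effect sandwich one central permutation $\Braid$ while the input wire passes alongside. This is done by repeatedly isolating one effect (or state) occurring anywhere in the circuit, sliding it outward through swaps until it sits in the leftmost (rightmost) column, and iterating until only a permutation remains in the middle; the bipartite-permutation normal form of \autoref{thm:OPT:symmetric:permutations:generalForm} is then invoked exactly once, at the very end, and $\Braid_3,\Braid_4$ are absorbed into $\rket{\preparationEventNoDown{\rho}}$ and $\rbra{\observationEventNoDown{a}}$. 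Your inductive steps---closure of the normal form under parallel and sequential composition---are essentially the computations the paper carries out \emph{after} the theorem (to check that the transformation sets of a MOPT are closed under composition), and they work for the same reason: states and effects slide through permutations and coalesce into a single state and a single effect, and compositions of permutations are permutations. The trade-off is that your route re-derives the bipartite splitting of \autoref{thm:OPT:symmetric:permutations:generalForm} at every compositional step, whereas the paper's intermediate form defers all of that bookkeeping to a single final application; on the other hand, your induction delivers the closure-under-composition statements as a by-product rather than as a separate verification. One small caveat: in your sequential step you should note explicitly that the first factor's effect can be delayed (slid rightward) past everything acting on other wires before being merged with the second factor's effect---this is the mirror image of sliding the second preparation leftward, and both are needed to reach the single-preparation/single-effect shape before the permutation theorem can be applied.
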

\begin{proof}
	To prove that this result, we will start by showing that every transformation can be written in the form 
	\begin{equation}
		\label{eqt:OPT:minimal:symmetric:transf:part}
		\myQcircuit{
			&\s{A}\qw&\gate{\eventNoDown{T}}&\s{B}\qw&\qw&
		} = 
		\myQcircuit{
			&\prepareC{\prim{\preparationEventNoDown{\rho}}}&\s{\prim{C}}\qw&\multigate{1}{\Braid}&\s{\prim{D}}\qw&\measureD{\prim{\observationEventNoDown{a}}}&
			\\
			&\s{A}\qw&\qw&\ghost{\Braid}&\qw&\s{B}\qw&\qw&
		}, 
	\end{equation}
	Let us consider the decomposition of $\eventNoDown{T}$ in its constituent elements, \eqref{eqt:OPT:minimal:def:event}, and focus our attention on one of the measurements in it. An effect was chosen, but the procedure remains the same even if one chooses to start with a state. In the case in which neither of them are included in the decomposition it means that $\eventNoDown{T} = \Braid$, i.e., \eqref{eqt:OPT:minimal:symmetric:transf:part} with $\system{C}' = \system{D}' = \trivialSystem$.\\
	In the case in which an effect $\rbra{\observationEventNoDown{a}_{1}}$ is present, it is possible to isolate it and rewrite the transformation in the following way,
	\begin{equation*}
		\myQcircuitSupMat{
			&\s{A}\qw&\gate{\eventNoDown{T}}&\s{B}\qw&\qw&
		} =
		\myQcircuitSupMat{
			&\s{A}\qw&\multigate{2}{\eventNoDown{T}_{1}}&\qw&\s{D_{2}}\qw&\qw&\multigate{2}{\eventNoDown{T}_{2}}&\s{B}\qw&\qw&
			\\
			&\pureghost{}&\pureghost{\eventNoDown{T}_{1}}&\s{D_{1}}\qw&\measureD{\observationEventNoDown{a}_{1}}&\pureghost{}&\pureghost{\eventNoDown{T}_{2}}&\pureghost{}&
			\\
			&\pureghost{}&\pureghost{\eventNoDown{T}_{1}}&\qw&\s{D_{3}}\qw&\qw&\ghost{\eventNoDown{T}_{2}}&\pureghost{}
		},
	\end{equation*}
	where $\eventNoDown{T}_{1}$ and $\eventNoDown{T}_{2}$ are such that $\eventNoDown{T} = \eventNoDown{T}_{2} \seqC \left( \identityTest{D_{2}} \paralC \rbra{\observationEventNoDown{a}_{1}} \paralC \identityTest{D_{3}} \right) \seqC \eventNoDown{T}_{1}$ and $\system{D}_{1}, \system{D_{2}}, \system{D_{3}} \in \Sys{\OPT}$ are appropriate systems. It is not excluded the possibility of $\system{D_{2}}, \system{D_{3}}$ being the trivial system.\\
	Using the reversibility of the permutations, it is possible to write
	\allowDisplayBreaks{
		\begin{equation*}
			\begin{aligned}
				& \myQcircuitSupMat{
					&\s{A}\qw&\multigate{2}{\eventNoDown{T}_{1}}&\s{D_{2}}\qw&\braidingSym&\s{D_{1}}\qw&\braidingSym&\s{D_{2}}\qw&\qw&\qw&\multigate{2}{\eventNoDown{T}_{2}}&\s{B}\qw&\qw&
					\\
					&\pureghost{}&\pureghost{\eventNoDown{T}_{1}}&\s{D_{1}}\qw&\braidingGhost&\s{D_{2}}\qw&\braidingGhost&\s{D_{1}}\qw&\measureD{\observationEventNoDown{a}_{1}}&\pureghost{}&\pureghost{\eventNoDown{T}_{2}}&\pureghost{}&
					\\
					&\pureghost{}&\pureghost{\eventNoDown{T}_{1}}&\qw&\qw&\qw&\s{D_{3}}\qw&\qw&\qw&\qw&\ghost{\eventNoDown{T}_{2}}&\pureghost{}
				}  \\
				&  = \myQcircuitSupMat{
					&\s{A}\qw&\multigate{2}{\eventNoDown{T}_{1}}&\s{D_{2}}\qw&\braidingSym&\s{D_{1}}\qw&\measureD{\observationEventNoDown{a}_{1}}&\braidingInvId&\s{D_{1}}\qw&\multigate{2}{\eventNoDown{T}_{2}}&\s{B}\qw&\qw&
					\\
					&\pureghost{}&\pureghost{\eventNoDown{T}_{1}}&\s{D_{1}}\qw&\braidingGhost&\qw&\s{D_{2}}\qw&\braidingGhost&\pureghost{}&\pureghost{\eventNoDown{T}_{2}}&\pureghost{}&
					\\
					&\pureghost{}&\pureghost{\eventNoDown{T}_{1}}&\qw&\qw&\qw&\s{D_{3}}\qw&\qw&\qw&\ghost{\eventNoDown{T}_{2}}&\pureghost{}
				}  \\
				& = \myQcircuitSupMat{
					&\s{A}\qw&\multigate{2}{\eventNoDown{T}'_{1}}&\s{D_{1}}\qw&\measureD{\observationEventNoDown{a}_{1}}&
					\\
					&\pureghost{}&\pureghost{\eventNoDown{T}'_{1}}&\s{D_{2}}\qw&\qw&\multigate{1}{\eventNoDown{T}_{2}}&\s{B}\qw&\qw&
					\\
					&\pureghost{}&\pureghost{\eventNoDown{T}'_{1}}&\s{D_{3}}\qw&\qw&\ghost{\eventNoDown{T}_{2}}&
				}  \\
				& = \myQcircuitSupMat{
					&\s{A}\qw&\multigate{1}{\eventNoDown{T}^{1}}&\s{D_{1}}\qw&\measureD{\observationEventNoDown{a}_{1}}&
					\\
					&\pureghost{}&\pureghost{\eventNoDown{T}^{1}}&\s{B}\qw&\qw&\qw&\qw&
				}.
			\end{aligned}
		\end{equation*}
	}
	%The fact that \system{B} is ``lower'' at the end is not important since it is always possible to swap it with the trivial system obtaining \new{@@@ ??? @@@}
	%\begin{equation*}
		%\myQcircuitSupMat{
%			&\s{A}\qw&\multigate{1}{\eventNoDown{T}^{1}}&\s{D_{1}}\qw&\measureD{\observationEventNoDown{a}_{1}}&\braidingInvId&\s{B}\qw&\qw&
%			\\
%			&\pureghost{}&\pureghost{\eventNoDown{T}^{1}}&\s{B}\qw&\qw&\braidingGhost&
%		}.
%	\end{equation*}
	
	Now it is sufficient to iterate the procedure on $\eventNoDown{T}^{1}$ until, after $n$ steps, one obtains a transformation $\eventNoDown{T}^{n} = \Braid$. The result would be something of the form
	\begin{equation*}
		\myQcircuitSupMat{
			&\prepareC{\preparationEventNoDown{\rho}_{1}}&\s{C_1}\qw&\multigate{3}{\eventNoDown{T}^{n}}&\s{D_1}\qw&\measureD{\observationEventNoDown{a}_{1}}&
			\\
			&\vdots&\vdots&\pureghost{\eventNoDown{T}^{n}}&\vdots&\vdots&
			\\
			&\prepareC{\preparationEventNoDown{\rho}_{l}}&\s{C_l}\qw&\ghost{\eventNoDown{T}^{n}}&\s{D_m}\qw&\measureD{\observationEventNoDown{a}_{m}}&
			\\
			&\s{A}\qw&\qw&\ghost{\eventNoDown{T}^{n}}&\qw&\s{B}\qw&\qw&
		} =
		\myQcircuitSupMat{
			&\prepareC{\rho}&\s{C'}\qw&\multigate{1}{\Braid}&\s{D'}\qw&\measureD{a}&
			\\
			&\s{A}\qw&\qw&\ghost{\Braid}&\qw&\s{B}\qw&\qw&
		}.
	\end{equation*} 
	
	We can now apply \autoref{thm:OPT:symmetric:permutations:generalForm} to \eqref{eqt:OPT:minimal:symmetric:transf:part} obtaining
	\begin{equation*}
		\myQcircuit{
			&\s{A}\qw&\gate{\eventNoDown{T}}&\s{B}\qw&\qw&
		} =
		\myQcircuit{
			&\prepareC{\prim{\preparationEventNoDown{\rho}}}&\s{\prim{C}}\qw&\multigate{1}{\Braid_{3}}&\qw&\s{C}\qw&\qw&\multigate{1}{\Braid_{4}}&\s{\prim{D}}\qw&\measureD{\prim{\observationEventNoDown{a}}}&
			\\
			&\pureghost{}&\pureghost{}&\pureghost{\Braid_{3}}&\s{\prim{B}}\qw&\braidingSym&\s{\prim{A}}\qw&\ghost{\Braid_{4}}&
			\\
			&\pureghost{}&\s{A}\qw&\multigate{1}{\Braid_{1}}&\s{\prim{A}}\qw&\braidingGhost&\s{\prim{B}}\qw&\multigate{1}{\Braid_{2}}&\s{B}\qw&\qw&
			\\
			&\pureghost{}&\pureghost{}&\pureghost{\Braid_{1}}&\qw&\s{E}\qw&\qw&\ghost{\Braid_{2}}&
		}.
	\end{equation*}
	Now absorbing $\Braid_{3}, \Braid_{4}$ into $\rketSystem{\prim{\preparationEventNoDown{\rho}}}{\prim{C}}$ and $\rbraSystem{\prim{\observationEventNoDown{a}}}{\prim{D}}$, respectively, the proof is concluded:
	\begin{equation*}
		\myQcircuit{
			&\s{A}\qw&\gate{\eventNoDown{T}}&\s{B}\qw&\qw&
		} =
		\myQcircuit{
			&\pureghost{}&\multiprepareC{1}{\preparationEventNoDown{\rho}}&\qw&\s{C}\qw&\qw&\multimeasureD{1}{\observationEventNoDown{a}}&
			\\
			&\pureghost{}&\pureghost{\preparationEventNoDown{\rho}}&\s{\prim{B}}\qw&\braidingSym&\s{\prim{A}}\qw&\ghost{\observationEventNoDown{a}}&
			\\
			&\s{A}\qw&\multigate{1}{\Braid_{1}}&\s{\prim{A}}\qw&\braidingGhost&\s{\prim{B}}\qw&\multigate{1}{\Braid_{2}}&\s{B}\qw&\qw&
			\\
			&\pureghost{}&\pureghost{\Braid_{1}}&\qw&\s{E}\qw&\qw&\ghost{\Braid_{2}}&
		}.
	\end{equation*}
\end{proof}

%\begin{remark}
%	A result analogous to \autoref{thm:OPT:minimal:symmetric:typeD:generalTransf} can be proven also for non-symmetric MOPT stating that all transformations can be written in the form \eqref{eqt:OPT:minimal:symmetric:transf:part}.
%\end{remark}

One important check for a well-defined OPT is that the spaces of transformations must be closed under parallel and sequential composition. This can be easily proved to hold in every MOPT by exploiting \eqref{eqt:OPT:minimal:symmetric:transf:part}.\\
Let us start by demonstrating the case of sequential composition:
\allowdisplaybreaks{
	\begin{align*}
		\label{eqt:OPT:minimal:transf:sequential}
		\begin{split}
			& \myQcircuitSupMat{
				&\prepareC{\preparationEventNoDown{\rho}_{1}}&\s{C_{1}}\qw&\multigate{1}{\Braid_{1}}&\s{D_{1}}\qw&\measureD{\observationEventNoDown{a}_{1}}&\pureghost{}&\prepareC{\preparationEventNoDown{\rho}_{2}}&\s{C_{2}}\qw&\multigate{1}{\Braid_{2}}&\s{D_{2}}\qw&\measureD{\observationEventNoDown{a}_{2}}&
				\\
				&\s{A}\qw&\qw&\ghost{\Braid_{1}}&\qw&\qw&\s{B}\qw&\qw&\qw&\ghost{\Braid_{2}}&\qw&\s{F}\qw&\qw&
			}  \\[10pt]
			& = \myQcircuitSupMat{
				&\prepareC{\preparationEventNoDown{\rho}_{2}}&\s{C_{2}}\qw&\qw&\qw&\braidingSym&\qw&\s{D_{1}}\qw&\qw&\measureD{\observationEventNoDown{a}_{1}}&
				\\
				&\prepareC{\preparationEventNoDown{\rho}_{1}}&\s{C_{1}}\qw&\multigate{1}{\Braid_{1}}&\s{D_{1}}\qw&\braidingGhost&\s{C_{2}}\qw&\multigate{1}{\Braid_{2}}&\s{D_{2}}\qw&\measureD{\observationEventNoDown{a}_{2}}&
				\\
				&\s{A}\qw&\qw&\ghost{\Braid_{1}}&\qw&\s{B}\qw&\qw&\ghost{\Braid_{2}}&\qw&\s{F}\qw&\qw&
			}  \\[10pt]
			& = \myQcircuitSupMat{
				&\prepareC{\preparationEventNoDown{\rho}_{2}}&\s{C_{2}}\qw&\multigate{2}{\Braid_{3}}&\s{D_{1}}\qw&\measureD{\observationEventNoDown{a}_{1}}&
				\\
				&\prepareC{\preparationEventNoDown{\rho}_{1}}&\s{C_{1}}\qw&\ghost{\Braid_{3}}&\s{D_{2}}\qw&\measureD{\observationEventNoDown{a}_{2}}&
				\\
				&\s{A}\qw&\qw&\ghost{\Braid_{3}}&\qw&\s{F}\qw&\qw&
			}  \\[10pt]
			& = \myQcircuitSupMat{
				&\prepareC{\preparationEventNoDown{\rho}_{3}}&\s{C_{3}}\qw&\multigate{1}{\Braid_{3}}&\s{D_{3}}\qw&\measureD{\observationEventNoDown{a}_{3}}&
				\\
				&\s{A}\qw&\qw&\ghost{\Braid_{3}}&\qw&\s{F}\qw&\qw&
			}.
		\end{split}
	\end{align*}
}
The proof for parallel composition is analogous,
\allowDisplayBreaks{
	\begin{align*}
		%\label{eqt:OPT:minimal:transf:parallel}
		\begin{split}
			& \myQcircuitSupMat{
				&\prepareC{\preparationEventNoDown{\rho}_{1}}&\s{C_{1}}\qw&\multigate{1}{\Braid_{1}}&\s{D_{1}}\qw&\measureD{\observationEventNoDown{a}_{1}}&
				\\
				&\s{A_{1}}\qw&\qw&\ghost{\Braid_{1}}&\qw&\s{B_{1}}\qw&\qw&
				\\
				&\prepareC{\preparationEventNoDown{\rho}_{2}}&\s{C_{2}}\qw&\multigate{1}{\Braid_{2}}&\s{D_{2}}\qw&\measureD{\observationEventNoDown{a}_{2}}&
				\\
				&\s{A_{2}}\qw&\qw&\ghost{\Braid_{1}}&\qw&\s{B_{2}}\qw&\qw&
			}  \\[10pt]
			& = \myQcircuitSupMat{
				&\pureghost{}&\pureghost{}&\pureghost{}&\pureghost{}&\prepareC{\preparationEventNoDown{\rho}_{1}}&\s{C_{1}}\qw&\multigate{1}{\Braid_{1}}&\s{D_{1}}\qw&\measureD{\observationEventNoDown{a}_{1}}&
				\\
				&\s{A_{1}}\qw&\qw&\braidingSym&\s{C_{2}}\qw&\braidingSym&\s{A_{1}}\qw&\ghost{\Braid_{1}}&\qw&\s{B_{1}}\qw&\qw&
				\\
				&\prepareC{\preparationEventNoDown{\rho}_{2}}&\s{C_{2}}\qw&\braidingGhost&\s{A_{1}}\qw&\braidingGhost&\s{C_{2}}\qw&\multigate{1}{\Braid_{2}}&\s{D_{2}}\qw&\measureD{\observationEventNoDown{a}_{2}}&
				\\
				&\qw&\qw&\qw&\s{A_{2}}\qw&\qw&\qw&\ghost{\Braid_{1}}&\qw&\s{B_{2}}\qw&\qw&
			}  \\[10pt]
			& = \myQcircuitSupMat{
				&\pureghost{}&\pureghost{}&\prepareC{\preparationEventNoDown{\rho}_{1}}&\s{C_{1}}\qw&\multigate{1}{\Braid_{1}}&\s{D_{1}}\qw&\measureD{\observationEventNoDown{a}_{1}}&
				\\
				&\prepareC{\preparationEventNoDown{\rho}_{2}}&\s{C_{2}}\qw&\braidingSym&\s{A_{1}}\qw&\ghost{\Braid_{1}}&\qw&\s{B_{1}}\qw&\qw&
				\\
				&\qw&\s{A_{1}}\qw&\braidingGhost&\s{C_{2}}\qw&\multigate{1}{\Braid_{2}}&\s{D_{2}}\qw&\measureD{\observationEventNoDown{a}_{2}}&
				\\
				&\qw&\s{A_{2}}\qw&\qw&\qw&\ghost{\Braid_{1}}&\qw&\s{B_{2}}\qw&\qw&
			}
		\end{split}
	\end{align*}
}
\noindent{}and applying the same procedure on the right-hand side of the circuit with $\rbra{\observationEventNoDown{a}_{2}}$ one obtains
\allowDisplayBreaks{
	\begin{align*}
		& = \myQcircuitSupMat{
			&\prepareC{\preparationEventNoDown{\rho}_{1}}&\s{C_{1}}\qw&\qw&\qw&\multigate{1}{\Braid_{1}}&\s{D_{1}}\qw&\qw&\qw&\measureD{\observationEventNoDown{a}_{1}}&
			\\
			&\prepareC{\preparationEventNoDown{\rho}_{2}}&\s{C_{2}}\qw&\braidingSym&\s{A_{1}}\qw&\ghost{\Braid_{1}}&\s{B_{1}}\qw&\braidingSym&\s{D_{2}}\qw&\measureD{\observationEventNoDown{a}_{2}}&
			\\
			&\qw&\s{A_{1}}\qw&\braidingGhost&\s{C_{2}}\qw&\multigate{1}{\Braid_{2}}&\s{D_{2}}\qw&\braidingGhost&\s{B_{1}}\qw&\qw&\qw&
			\\
			&\qw&\s{A_{2}}\qw&\qw&\qw&\ghost{\Braid_{2}}&\qw&\s{B_{2}}\qw&\qw&\qw&\qw&
		}  \\[10pt]
		& = \myQcircuitSupMat{
			&\prepareC{\preparationEventNoDown{\rho}_{3}}&\s{C_{3}}\qw&\multigate{1}{\Braid_{3}}&\s{D_{3}}\qw&\measureD{\observationEventNoDown{a}_{3}}&
			\\
			&\s{A_{3}}\qw&\qw&\ghost{\Braid_{3}}&\qw&\s{B_{3}}\qw&\qw&
		}.
	\end{align*}
}
\subsection{Properties of MOPTs with the causality assumption}
We conclude this Appendix by proving the stability result for the form of the deterministic transformations of MOPTs.
\begin{lemma}
	\label{lem:OPT:minimal:transf:causalDeterm}
	In a causal MOPT every deterministic transformation obtained as composition of the elements in \eqref{eqt:OPT:minimal:def:event} is of the form 
	\begin{equation}
		\label{eqt:OPT:minimal:transf:lem:causalDeterm}
		\minimalDeterministicCausalDestroyReprep{A}{B}{\prim{A}}{\prim{B}}{E}{\Braid^{(1)}}{\Braid^{(2)}}{\rho}.
	\end{equation}
\end{lemma}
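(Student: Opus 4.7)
The plan is to invoke \autoref{thm:OPT:minimal:symmetric:typeD:generalTransf} to put $\eventNoDown{T}$ in the canonical form \eqref{eqt:OPT:minimal:symmetric:transf:generic}, and then to exploit causality, together with the structure of MOPT tests, to simplify that form into \eqref{eqt:OPT:minimal:transf:lem:causalDeterm}.

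First, by \autoref{thm:OPT:minimal:symmetric:typeD:generalTransf}, any transformation $\eventNoDown{T}\in\Transf{A}{B}$ obtained by sequential and parallel composition of the basic events \eqref{eqt:OPT:minimal:def:event} admits the representation \eqref{eqt:OPT:minimal:symmetric:transf:generic} for suitable systems $\system{\prim{A}},\system{\prim{B}},\system{C},\system{E}$, permutations $\Braid_{1},\Braid_{2}$, preparation event $\preparationEventNoDown{\rho}\in\St{C\prim{B}}$, and observation event $\observationEventNoDown{a}\in\Eff{C\prim{A}}$, the latter two collected from the preparation and observation tests used in the composition. Since the rearrangement in the proof of that theorem is purely structural, every event $\eventNoDown{T}_{(x,y)}$ of the ambient MOPT test surrounding $\eventNoDown{T}$ takes the same form with a common choice of systems and permutations, differing only through its preparation event $\preparationEventNoDown{\rho}_{x}$ and observation event $\observationEventNoDown{a}_{y}$.

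Next, I show that $\eventNoDown{T}$ must coincide with the full coarse graining of this ambient test. Write $\eventNoDown{T}=\sum_{(x,y)\in S}\eventNoDown{T}_{(x,y)}$ for the outcomes $S$ contributing to $\eventNoDown{T}$, possibly a single one. Determinism of $\eventNoDown{T}$ gives $\rbraSystem{\observationUniqueDeterministic}{B}\eventNoDown{T}=\rbraSystem{\observationUniqueDeterministic}{A}$, while causality applied to the ambient test yields $\sum_{(x,y)}\rbraSystem{\observationUniqueDeterministic}{B}\eventNoDown{T}_{(x,y)}=\rbraSystem{\observationUniqueDeterministic}{A}$. Subtracting, $\sum_{(x,y)\notin S}\rbraSystem{\observationUniqueDeterministic}{B}\eventNoDown{T}_{(x,y)}=0$, and positivity of the effect cone together with the separation of transformations by effects forces $\eventNoDown{T}_{(x,y)}=0$ for every $(x,y)\notin S$. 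Hence $\eventNoDown{T}=\sum_{(x,y)}\eventNoDown{T}_{(x,y)}$ is the full coarse graining, which by linearity of \eqref{eqt:OPT:minimal:symmetric:transf:generic} in its preparation and observation slots agrees with the circuit \eqref{eqt:OPT:minimal:symmetric:transf:generic} equipped with the deterministic state $\preparationEventNoDown{\rho}\mathDef\sum_{x}\preparationEventNoDown{\rho}_{x}\in\St{C\prim{B}}$ and with observation slot $\sum_{y}\observationEventNoDown{a}_{y}=\rbraSystem{\observationUniqueDeterministic}{C\prim{A}}$, the unique deterministic effect on $\system{C\prim{A}}$.

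Finally, causality delivers the factorization $\rbraSystem{\observationUniqueDeterministic}{C\prim{A}}=\rbraSystem{\observationUniqueDeterministic}{C}\otimes\rbraSystem{\observationUniqueDeterministic}{\prim{A}}$. Inserting this into \eqref{eqt:OPT:minimal:symmetric:transf:generic}, the wire $\system{C}$ terminates at $\rbraSystem{\observationUniqueDeterministic}{C}$, collapsing $\preparationEventNoDown{\rho}$ into the deterministic state $\rketSystem{\preparationEventNoDown{\rho}'}{\prim{B}}\mathDef(\rbraSystem{\observationUniqueDeterministic}{C}\otimes\identityTest{\prim{B}})\rketSystem{\preparationEventNoDown{\rho}}{C\prim{B}}$ on $\system{\prim{B}}$, which is deterministic as the $\observationUniqueDeterministic$-marginal of a deterministic state in a causal theory. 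Meanwhile, the $\system{\prim{A}}$ output of $\Braid_{1}$, after passing through the swap, is absorbed by $\rbraSystem{\observationUniqueDeterministic}{\prim{A}}$. Redrawing the resulting diagram reproduces exactly the circuit in \eqref{eqt:OPT:minimal:transf:lem:causalDeterm}. The conceptually delicate step is the second one, where causality and positivity of sub-normalized effects combine to reduce $\eventNoDown{T}$ to the full coarse graining of its enveloping test; once this is in hand, the remaining passage to \eqref{eqt:OPT:minimal:transf:lem:causalDeterm} is a routine diagrammatic manipulation.
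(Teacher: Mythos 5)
Your proof is correct and follows exactly the route the paper intends (the lemma is stated without an explicit proof, but the main text's sketch for the atomicity of the identity uses the same reasoning): reduce to the canonical form of \autoref{thm:OPT:minimal:symmetric:typeD:generalTransf}, use determinism and causality to force the observation slot to coarse-grain to the unique deterministic effect $\rbraSystem{\observationUniqueDeterministic}{C\prim{A}}=\rbraSystem{\observationUniqueDeterministic}{C}\paralC\rbraSystem{\observationUniqueDeterministic}{\prim{A}}$, and marginalise $\rketSystem{\preparationEventNoDown{\rho}}{C\prim{B}}$ over $\system{C}$. Your intermediate step showing that a deterministic event must coincide with the full coarse graining of its ambient test is a correct and welcome way of making the determinism hypothesis operative.
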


\begin{remark}
	The transformation
	\begin{equation*}
		\measurePrepare{\prim{A}}{\prim{B}}{\rho}{\observationUniqueDeterministic}
	\end{equation*}
	between the two braid transformations in \eqref{eqt:OPT:minimal:transf:lem:causalDeterm}
	is sometimes referred to as ``destroy and reprepare,'' since whatever the input it will ``destroy'' it and prepare the state $\rket{\rho}$.
\end{remark}

\begin{theorem}
	\label{thm:OPT:minimal:transf:goodDeterministic}
	In a causal MOPT the limits of Cauchy sequences of deterministic transformations are still of the form \eqref{eqt:OPT:minimal:transf:lem:causalDeterm}.
\end{theorem}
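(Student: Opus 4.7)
The plan is to reduce the theorem to the Cauchy completeness of the state space, by first stabilising the combinatorial data appearing in the decomposition \eqref{eqt:OPT:minimal:transf:lem:causalDeterm}, and then transferring the Cauchy property from the transformations to the states. Let $\{\eventNoDown{T}_n\}$ be a Cauchy sequence of deterministic transformations from $\system{A}$ to $\system{B}$. By \autoref{lem:OPT:minimal:transf:causalDeterm}, each $\eventNoDown{T}_n$ admits a decomposition of the form \eqref{eqt:OPT:minimal:transf:lem:causalDeterm} with data $(\system{A'_n}, \system{B'_n}, \system{E_n}, \Braid^{(1)}_n, \Braid^{(2)}_n, \rho_n)$, where $\rho_n \in \St{B'_n}$ is a deterministic state and the permutations $\Braid^{(1)}_n, \Braid^{(2)}_n$ witness the isomorphisms $\system{A} \cong \system{A'_n E_n}$ and $\system{B} \cong \system{B'_n E_n}$.

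First, I would argue that one can extract a subsequence along which the combinatorial data $(\system{A'_n}, \system{B'_n}, \system{E_n}, \Braid^{(1)}_n, \Braid^{(2)}_n)$ is independent of $n$. For fixed $\system{A}$ and $\system{B}$ there are only finitely many ways of writing them as composites $\system{A'E}$ and $\system{B'E}$ sharing a common ancillary factor $\system{E}$---because the number of subsystem factorisations of a given system is finite in the compositional structure of an OPT---and, given such a splitting, there are finitely many permutations relabelling the factors. Hence the pigeonhole principle yields an infinite subsequence with constant data $(\system{A'}, \system{B'}, \system{E}, \Braid^{(1)}, \Braid^{(2)})$, and since the original Cauchy sequence has a unique limit, it suffices to compute this limit along the subsequence.

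Second, I would show that on this subsequence the only varying ingredient is $\rho_n \in \St{B'}$, and that the Cauchy property of $\{\eventNoDown{T}_n\}$ transfers to the sequence $\{\rho_n\}$. The map sending a state $\rho$ to the transformation obtained by inserting it in the fixed template is linear and admits a continuous left inverse: one precomposes the transformation with $[\Braid^{(1)}]^{-1}$ fed by an arbitrarily fixed deterministic state of $\system{A'}$, postcomposes with $[\Braid^{(2)}]^{-1}$, and finally applies the unique deterministic effect $\rbraSystem{\observationUniqueDeterministic}{E}$ on the ancillary wire; causality guarantees that this procedure returns $\rho$. Cauchy completeness of $\St{B'}$ (viewed as the transformations from $\trivialSystem$ to $\system{B'}$) then yields a limit state $\rho \in \St{B'}$, and continuity of parallel and sequential composition guarantees that $\lim_n \eventNoDown{T}_n$ coincides with the destroy-and-reprepare transformation built from $(\system{A'}, \system{B'}, \system{E}, \Braid^{(1)}, \Braid^{(2)}, \rho)$, which has the required form.

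The main obstacle is the pigeonhole step on the combinatorial data: rigorously bounding the number of admissible decompositions of $\system{A}$ and $\system{B}$ requires the (implicit) MOPT assumption that each system admits only finitely many factorisations into irreducible subsystems, which must be invoked carefully. A secondary but genuine subtlety is the treatment of the boundary cases in which one or more of $\system{A'}$, $\system{B'}$, $\system{E}$ equals $\trivialSystem$: in each such case the left-inverse construction must be suitably reinterpreted, and one must verify that the recovery of $\rho$ from the ambient transformation remains well defined and continuous.
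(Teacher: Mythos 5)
Your proposal is correct and follows essentially the same route as the paper: a pigeonhole argument stabilising the systems $\system{A}'$, $\system{B}'$, $\system{E}$ and the permutations along a subsequence, followed by transferring the Cauchy property to the states $\rho_n$ (your continuous left inverse is exactly the paper's use of monotonicity of the operational norm under pre- and post-composition with deterministic transformations) and invoking Cauchy completeness of $\St{B'}$. The only cosmetic difference is that the paper performs the pigeonhole extraction in two stages (first the outer permutations, then the regrouping of the ancillary factors), whereas you do it in one step.
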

\begin{proof}
	Let us start by considering a Cauchy sequence of deterministic transformations from \system{A} to \system{B}, which by \eqref{eqt:OPT:minimal:transf:lem:causalDeterm} we know to be of the form
	\begin{equation}
		\label{proof:OPT:minimal:transf:thm:goodDeterministic:1}
		\left\{
		\minimalDeterministicCausalDestroyReprepSequencePrime{A}{B}{A}{B}{E}{\Braid^{(1)}_{n}}{\Braid^{(2)}_{n}}{\rho_{n}}{n}
		\right\}_{n \in \mathbb{N}}.
	\end{equation}
	
	The proof can now be subdivided into three steps:
	\begin{itemize}
		\item Given that the two systems \system{A} and \system{B} can only be a composition of a finite number of systems, the sets of permutations that have this systems respectively as input and output [\permutationCollectionAB{A}{E} and $\permutationCollectionAB{\prim{E}}{B}$ for all appropriate systems \system{E},$\system{\prim{E}} \in \Sys{\OPT}$] are finite.

		Consequently, due to the fact that we have a sequence, i.e., infinite terms, there must exist at least a couple of permutations $\Braid^{1}$ and $\Braid^{2}$ that appear infinitely many times together ``on the outside'' of the elements of the sequence \eqref{proof:OPT:minimal:transf:thm:goodDeterministic:1}. We can now concentrate on the subsequence with this couple of permutations
		\begin{equation*}
			\left\{ \minimalDeterministicCausalDestroyReprepSequencePrime{A}{B}{A}{B}{E}{\Braid^{(1)}}{\Braid^{(2)}}{\rho_{n}}{n} \right\}_{n \in \mathbb{N}}.
		\end{equation*}
		Since \eqref{proof:OPT:minimal:transf:thm:goodDeterministic:1} is a Cauchy sequence, also its subsequences will be Cauchy and they will have the same limit.
		
		\item We now focus our attention on the systems $\system{E}_{n}$. Due to the fact that in the previous point we have fixed $\Braid^{1}$, the systems contained within the composite system $\system{A}'_{n}\system{E}_{n}$ will not change. Therefore, the only change that can occur at the variation of $n$ is how they are grouped. \\
		For example, if $\system{A}'_{n} = \system{S_{1}}$ and $\system{E}_{n} = \left(\system{S_{2}} \system{S}_{3} \system{S_{4}}\right)$, for a different value $\prim{n} \neq n$, it must be  $\system{A}'_{n'} = \system{S_{1}}\system{S_{2}}$ and $\system{E}_{n'} = \system{S_{3}} \system{S_{4}}$, or $\system{A}'_{n'} = \left(\system{S_{1}}\system{S_{2}}\system{S_{3}}\right)$ and $\system{E}_{n'} = \system{S_{4}}$, or any other possible regrouping (also the original one) in which the order of the $\system{S_{i}}$ does not change.\\
		Given that  $\system{A}'_{n}\system{E}_{n}$ can be composed only of a finite number of systems, and analogously for $\system{B}'_{n}\system{E}_{n}$, it is always possible to find at least a system \system{E} that appears infinitely many times in the considered sequence. By fixing \system{E}, then also the systems $\system{A}'$ and $\system{B}'$ are automatically fixed. Proceeding exactly as in the previous point we will focus from now on the subsequence where these systems are fixed:
		\begin{equation*}
			\left\{ \minimalDeterministicCausalDestroyReprep{A}{B}{\prim{A}}{\prim{B}}{E}{\Braid^{(1)}}{\Braid^{(2)}}{\rho_{n}} \right\}_{n \in \mathbb{N}}.
		\end{equation*}	
		
		\item Considering this subsequence we can now easily see that the following relation holds $\forall n,m \in \mathbb{N}$,
		\allowDisplayBreaks{
			\begin{align*}
				%Norm undefined in commands, pay attention
				& \left\lVert \quad \minimalDeterministicCausalDestroyReprep{A}{B}{\prim{A}}{\prim{B}}{E}{\Braid^{(1)}}{\Braid^{(2)}}{\rho_{n}}  \right. \\[10pt] & \hspace{1cm} - \; \left. \minimalDeterministicCausalDestroyReprep{A}{B}{\prim{A}}{\prim{B}}{E}{\Braid^{(1)}}{\Braid^{(2)}}{\rho_{m}}
				\right\rVert_{op}  \\[10pt] & = 
				\normOp{ \quad
					\myQcircuit{
						&\s{\prim{A}}\qw&\measureD{\observationUniqueDeterministic}&\pureghost{}&\prepareC{\eventNoDown{\rho}_{n}}&\s{\prim{B}}\qw&\qw&
						\\
						&\qw&\qw&\s{E}\qw&\qw&\qw&\qw&
					} - \quad
					\myQcircuit{
						&\s{\prim{A}}\qw&\measureD{\observationUniqueDeterministic}&\pureghost{}&\prepareC{\eventNoDown{\rho}_{m}}&\s{\prim{B}}\qw&\qw&
						\\
						&\qw&\qw&\s{E}\qw&\qw&\qw&\qw&
					}
				}  \\[10pt] & =
				\normOp{ \quad
					\myQcircuit{
						&\s{\prim{A}}\qw&\measureD{\observationUniqueDeterministic}&\prepareC{\eventNoDown{\rho}_{n}}&\s{\prim{B}}\qw&\qw&
					} - \quad
					\myQcircuit{
						&\s{\prim{A}}\qw&\measureD{\observationUniqueDeterministic}&\prepareC{\eventNoDown{\rho}_{m}}&\s{\prim{B}}\qw&\qw&
					}
				}  \\[10pt] & \geq
				\normOp{
					\myQcircuit{
						&\prepareC{\eventNoDown{\rho}_{n}}&\s{\prim{B}}\qw&\qw&
					} - \quad
					\myQcircuit{
						&\prepareC{\eventNoDown{\rho}_{m}}&\s{\prim{B}}\qw&\qw&
					}	
				},
			\end{align*}
		}
		where the norm used above is the \textdef{operational norm}~\cite{book:quantumTheoryFromFirstPrinciples} which has a nice operational interpretation: The distance between two transformations is related to the probability of discriminating them through the best possible procedure one can implement. This norm is well defined over the spaces of transformation since, as observed in the main text, these can be embedded in a real vector space. Furthermore, it satisfies the monotonicity property~\cite{book:quantumTheoryFromFirstPrinciples}
		\begin{equation*}
			\normOp{\eventNoDown{T}} \geq \normOp{\eventNoDown{E}\eventNoDown{T}\eventNoDown{C}},
		\end{equation*}	
		where $\eventNoDown{E} \in \Transf{C}{D}$ and $\eventNoDown{C} \in \Transf{A}{B}$ are deterministic transformations---the equality holds if both $\eventNoDown{E}$ and $\eventNoDown{C}$ are reversible---which is what was used in the last steps of the proof.\\
		
		What this implies is that the sequence of deterministic states of this particular subsequence of \eqref{proof:OPT:minimal:transf:thm:goodDeterministic:1} is Cauchy. 
		
		We can therefore conclude that the subsequence considered in this point, and consequently \eqref{proof:OPT:minimal:transf:thm:goodDeterministic:1}, converges to 
		\begin{equation*}
			\minimalDeterministicCausalDestroyReprep{A}{B}{\prim{A}}{\prim{B}}{E}{\Braid^{(1)}}{\Braid^{(2)}}{\rho},
		\end{equation*}
		where $\rketSystem{\eventNoDown{\rho}}{\prim{B}} = \lim_{n \to \infty} \rketSystem{\eventNoDown{\rho}_{n}}{\prim{B}}$. 
		With this we conclude our proof, since we found the desired result.
	\end{itemize}
\end{proof}

\begin{theorem}
	\label{thm:OPT:minimal:transf:stabilization}
	In a causal MOPT whenever one considers a Cauchy sequence of generic transformations obtained as parallel and sequential composition of the elements in \eqref{eqt:OPT:minimal:def:event},
	\begin{equation}
		\label{eqt:sequenceJelly}
		\left\{
			\myQcircuit{
				&\pureghost{}&\multiprepareC{1}{\preparationEventNoDown{\rho}_{n}}&\qw&\sSequence{C}{n}\qw&\qw&\multimeasureD{1}{\observationEventNoDown{a}_{n}}&
				\\
				&\pureghost{}&\pureghost{\preparationEventNoDown{\rho}_{n}}&\sSequencePrime{B}{n}\qw&\braidingSym&\sSequencePrime{A}{n}\qw&\ghost{\observationEventNoDown{a}_{n}}&
				\\				&\s{A}\qw&\multigate{1}{\Braid^{\left(1\right)}_{n}}&\sSequencePrime{A}{n}\qw&\braidingGhost&\sSequencePrime{B}{n}\qw&\multigate{1}{\Braid^{\left(2\right)}_{n}}&\s{B}\qw&\qw&
				\\
				&\pureghost{}&\pureghost{\Braid^{\left(1\right)}_{n}}&\qw&\sSequence{E}{n}\qw&\qw&\ghost{\Braid^{\left(2\right)}_{n}}&
			}
		\right\}_{n \in \mathbb{N}},
	\end{equation}	
	there always exists a subsequence where the systems $\system{E}_{n}$, $\system{\prim{A}}_{n}$, $\system{\prim{B}}_{n}$ and the permutations $\Braid^{\left(1\right)}_{n}$, $\Braid^{\left(2\right)}_{n}$ are fixed:
	\begin{equation*}
		\left\{
			\myQcircuit{
				&\pureghost{}&\multiprepareC{1}{\preparationEventNoDown{\rho}_{n}}&\qw&\sSequence{C}{n}\qw&\qw&\multimeasureD{1}{\observationEventNoDown{a}_{n}}&
				\\
				&\pureghost{}&\pureghost{\preparationEventNoDown{\rho}_{n}}&\s{B'}\qw&\braidingSym&\s{A'}\qw&\ghost{\observationEventNoDown{a}_{n}}&
				\\
				&\s{A}\qw&\multigate{1}{\Braid^{\left(1\right)}}&\s{A'}\qw&\braidingGhost&\s{B'}\qw&\multigate{1}{\Braid^{\left(2\right)}}&\s{B}\qw&\qw&
				\\
				&\pureghost{}&\pureghost{\Braid^{\left(1\right)}}&\qw&\s{E}\qw&\qw&\ghost{\Braid^{\left(2\right)}}&
			}
		\right\}_{n \in \mathbb{N}}.
	\end{equation*}	
\end{theorem}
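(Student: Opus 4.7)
The plan is to mimic and generalise the pigeonhole strategy already used in the first two bullet points of the proof of \autoref{thm:OPT:minimal:transf:goodDeterministic}, promoting it from the special (deterministic, ``destroy-and-reprepare'') situation to the fully general form \eqref{eqt:OPT:minimal:symmetric:transf:generic}. The crucial observation is that each term of the Cauchy sequence \eqref{eqt:sequenceJelly} carries a finite tuple of \emph{structural data}---the auxiliary systems $\system{A}'_{n},\system{B}'_{n},\system{E}_{n}$ and the permutations $\Braid^{(1)}_{n},\Braid^{(2)}_{n}$---drawn from a finite pool, so infinitely many indices $n$ must share the same value of this tuple, and the corresponding subsequence is exactly the one required by the theorem.

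First, I would observe that $\system{A}$ and $\system{B}$ are common to all terms, since the sequence lives in $\Transf{A}{B}$. Each of these systems is a composite of finitely many elementary subsystems of the theory, so the ways of decomposing $\system{A}\cong\system{A}'\system{E}$ and simultaneously $\system{B}\cong\system{B}'\system{E}$ for a \emph{common} bystander wire $\system{E}$---which is forced by the fact that the same $\system{E}$ threads through both $\Braid^{(1)}_{n}$ and $\Braid^{(2)}_{n}$ in \eqref{eqt:OPT:minimal:symmetric:transf:generic}---form a finite collection of triples $(\system{A}',\system{B}',\system{E})$. Applying the infinite pigeonhole principle to $n$ then extracts a subsequence of \eqref{eqt:sequenceJelly} along which this triple is constant.

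Next, with $(\system{A}',\system{B}',\system{E})$ fixed, the permutations $\Braid^{(1)}_{n}\in\permutationCollectionAB{A}{A'E}$ and $\Braid^{(2)}_{n}\in\permutationCollectionAB{B'E}{B}$ take values in finite sets---the symmetric groups acting on the elementary factors of $\system{A}$ and $\system{B}$, respectively. A second pigeonhole argument therefore extracts a further subsequence along which $\Braid^{(1)}_{n}\equiv\Braid^{(1)}$ and $\Braid^{(2)}_{n}\equiv\Braid^{(2)}$ are independent of $n$. Because subsequences of a Cauchy sequence are themselves Cauchy, this inherits the Cauchy property from \eqref{eqt:sequenceJelly}, and has exactly the form required in the statement, with only the preparations $\preparationEventNoDown{\rho}_{n}$, the observations $\observationEventNoDown{a}_{n}$, and the internal ancillary systems $\system{C}_{n}$ left free to depend on $n$.

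The argument is purely combinatorial, so I do not anticipate a genuine technical obstacle; the only point that needs mild care in the bookkeeping is to verify that a common $\system{E}$ can indeed be extracted on both sides of the braiding in \eqref{eqt:OPT:minimal:symmetric:transf:generic}. This follows from the diagrammatic identity of the $\system{E}$ wire running between $\Braid^{(1)}_{n}$ and $\Braid^{(2)}_{n}$ together with the constraint that $\system{A}'\system{E}$ and $\system{B}'\system{E}$ be permutations of $\system{A}$ and $\system{B}$, respectively; once this compatibility condition is fixed, the iterated pigeonhole runs without further subtleties and produces the desired subsequence.
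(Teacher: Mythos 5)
Your proposal is correct and is essentially the paper's own argument: the paper proves this theorem by re-running the first two pigeonhole steps of the proof of \autoref{thm:OPT:minimal:transf:goodDeterministic} (finitely many admissible permutations on the finitely many elementary factors of $\system{A}$ and $\system{B}$, then finitely many regroupings into $\system{A}'_{n}\system{E}_{n}$ and $\system{B}'_{n}\system{E}_{n}$), exactly as you do. The only difference is that you extract the constant triple $(\system{A}',\system{B}',\system{E})$ before the constant permutations while the paper fixes the permutations first, which is immaterial.
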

\begin{proof}
	The proof of this result consists in going over the first two points of the proof of \autoref{thm:OPT:minimal:transf:goodDeterministic}, and applying them to the case considered here.
\end{proof}

\section{Minimal classical theory}\label{app:mct}
We will now discuss in detail minimal classical theory (MCT). The procedure for the construction of generic OPTs presented in Ref.~\cite{article:BCT} guarantees that the postulates here presented are sufficient to to construct a well-defined operational theory.

\begin{postulate}[Classicality, convexity and type of systems]
	\label{pos:MCT:1}
	The theory \MCT{} is classical, convex and satisfies local discriminability. In addition to the trivial system, for every integer $\sysDimensionD \geq 1$, $\Sys{\MCT{}}$ contains a type of system of size $\sysDimensionD$.
\end{postulate}

\begin{postulate}[Preparation and observation tests]
	\label{pos:MCT:2}
	Given any system $\system{A} \in \Sys{\MCT{}}$, a collection $\eventPreparationTestSystem{\rho}{x}{X}{A} \subset \St{A}$ is a preparation test if and only if $\sum_{x \in X} \rbraketSystem{\observationUniqueDeterministic}{\event{\rho}{x}}{A} = 1$. The observation tests of every system $\system{A} \in \Sys{\MCT{}}$ are all the collections $\eventObservationTestSystem{a}{y}{Y}{A} \subset \EffR{A}$ of generalized effects such that $\left\{ \rbraSystem{\observationEvent{a}{y}}{A} \boxtimes \identityTest{E} \right\}_{\outcomeSingle{y}{Y}} \subset \EffR{AE}$ maps preparation tests of $\system{AE}$ to preparation tests of $\system{E}$ for all $\system{E} \in \Sys{\MCT{}}$. 
\end{postulate}

Where $\EffR{A}$ for every $\system{A} \in \Sys{\MCT{}}$ is defined  by \autoref{pos:MCT:1} through the property of joint perfect discriminability.

\begin{postulate}[Transformations and tests]
	\label{pos:MCT:3}
	The only allowed tests are the ones given by the composition of the elements of
	\begin{equation}	
		\label{eqt:MCT:allowedInstr}
		\left\{ 
		\myQcircuitSupMat{
			&\s{A}\qw&\qw&
		}
		\quad , \quad
		\myQcircuitSupMat{
			&\s{A}\qw&\braidingSym&\s{B}\qw&\qw&
			\\
			&\s{B}\qw&\braidingGhost&\s{A}\qw&\qw&
		}
		\quad , \quad
		\myQcircuitSupMat{
			&\prepareC{\eventTest{\rho}{i}{I}}&\s{A}\qw&\qw&
		}
		\quad , \quad
		\myQcircuitSupMat{
			&\s{A}\qw&\measureD{\observationEventTest{a}{j}{J}}&
		}
		\right\}
	\end{equation} 
	where $\eventTest{\rho}{i}{I}$ and $\observationEventTest{a}{j}{J}$ are all the possible preparation tests and observation tests allowed in the theory by \autoref{pos:MCT:2}, and the limits of all Cauchy sequences of tests of this type. Thus the only allowed transformations are the ones obtainable by sequential and parallel composition of the elements of
	\begin{equation}
		\label{eqt:MCT:allowedTransf}
		\left\{ 
		\myQcircuitSupMat{
			&\s{A}\qw&\qw&
		}
		\quad , \quad
		\myQcircuitSupMat{
			&\s{A}\qw&\braidingSym&\s{B}\qw&\qw&
			\\
			&\s{B}\qw&\braidingGhost&\s{A}\qw&\qw&
		}
		\quad , \quad
		\myQcircuitSupMat{
			&\prepareC{\rho}&\s{A}\qw&\qw&
		}
		\quad , \quad
		\myQcircuitSupMat{
			&\s{A}\qw&\measureD{a}&
		}
		\right\}
	\end{equation}
	for every $\system{A}, \system{B} \in \Sys{\MCT{}}$, $\rket{\eventNoDown{\rho}} \in \StOPT{\MCT{}}$, and $\rbra{a} \in \EffOPT{\MCT{}}$, and the limits of all the Cauchy sequences of events of this type that belong to a test of the theory.
\end{postulate}

We recall the following definitions. 
\begin{definition}[Convex OPT]
	An OPT \OPT{} is \textdef{convex} if $\St{A}$ coincides with its convex hull for all $\system{A} \in \Sys{\OPT}$.
\end{definition}

\begin{definition}[Local discriminability]
	It is possible to discriminate any pair of states of composite systems using only local measurements~\cite{book:quantumTheoryFromFirstPrinciples}.
\end{definition}
On top of this, we remind that this theory is causal since every state is proportional to a deterministic one~\cite{book:quantumTheoryFromFirstPrinciples}. %

%\subsection{Consistency check}
%To verify that MCT is well defined as an OPT, we need to check that it satisfies the requirements expressed in Ref.~\cite{article:BCT}.\\
%Given that the theory is a restriction of classical theory and, therefore, it has its same associator and swap operations and it satisfies the property of local discriminability, the only property that is not trivial is satisfied is that the limit tests map preparation tests into preparation tests. However, this can be checked to be satisfied by direct calculation using the conservation of probabilities in limits of Cauchy sequences and \autoref{pos:MCT:2}.

\subsection{MCT has full compatibility of observation tests}\label{app:mct:compatibility}
We will now prove that MCT satisfies the property of full compatibility of observation tests. Let us consider two of them. The most generic form they can take in an $n$-dimensional system is the following:
\begin{align*}
	& \eventObservationTestSystem{\observationEventNoDown{a}}{x}{X}{A} = \left\{ \probabilityEventNoDown{p}^{0}_{0} \rbra{\observationEventNoDown{0}} +  \probabilityEventNoDown{p}^{0}_{1} \rbra{\observationEventNoDown{1}} + \cdots +  \probabilityEventNoDown{p}^{0}_{n} \rbra{\observationEventNoDown{n}}, \probabilityEventNoDown{p}^{1}_{0} \rbra{\observationEventNoDown{0}} +  \probabilityEventNoDown{p}^{1}_{1} \rbra{\observationEventNoDown{1}} + \cdots +  \probabilityEventNoDown{p}^{1}_{n} \rbra{n}, \ldots, \probabilityEventNoDown{p}^{m}_{0} \rbra{\observationEventNoDown{0}} +  \probabilityEventNoDown{p}^{m}_{1} \rbra{\observationEventNoDown{1}} + \cdots +  \probabilityEventNoDown{p}^{m}_{n} \rbra{\observationEventNoDown{n}} \right\} \in \Eff{A}\\
	& \eventObservationTestSystem{b}{y}{Y}{A} = \left\{ \probabilityEventNoDown{q}^{0}_{0} \rbra{\observationEventNoDown{0}} +  \probabilityEventNoDown{q}^{0}_{1} \rbra{\observationEventNoDown{1}} + \cdots +  \probabilityEventNoDown{q}^{0}_{n} \rbra{\observationEventNoDown{n}}, \probabilityEventNoDown{q}^{1}_{0} \rbra{\observationEventNoDown{0}} +  \probabilityEventNoDown{q}^{1}_{1} \rbra{\observationEventNoDown{1}} + \cdots +  \probabilityEventNoDown{q}^{1}_{n} \rbra{n}, \ldots, \probabilityEventNoDown{q}^{k}_{0} \rbra{\observationEventNoDown{0}} +  \probabilityEventNoDown{q}^{k}_{1} \rbra{\observationEventNoDown{1}} + \cdots +  \probabilityEventNoDown{p}^{k}_{n} \rbra{\observationEventNoDown{n}} \right\} \in \Eff{A},
\end{align*}
where $\sum_{i = 0}^{m} \probabilityEventNoDown{p}^{i}_{j} = 1 \quad \forall j = 0,\ldots,n$ with $\probabilityEventNoDown{p}^{i}_{j} \in [0,1] \, \forall i,j$ and analogously for $\probabilityEventNoDown{q}^{i}_{j}$. Defining now
\begin{equation*}
	\eventObservationTestSystem{c}{\outcomeDouble{i}{j}}{\outcomeSpaceDouble{I}{J}}{A} = \left\{ \probabilityEventNoDown{r}^{0}_{0} \rbra{\observationEventNoDown{0}}, \probabilityEventNoDown{s}^{0}_{0} \rbra{\observationEventNoDown{0}}, \probabilityEventNoDown{r}^{1}_{0} \rbra{\observationEventNoDown{0}}, \probabilityEventNoDown{s}^{1}_{0} \rbra{\observationEventNoDown{0}}, \ldots, \probabilityEventNoDown{r}^{i}_{j} \rbra{\observationEventNoDown{j}}, \probabilityEventNoDown{s}^{i}_{j} \rbra{\observationEventNoDown{j}}, \ldots  \right\} \in \Eff{A},
\end{equation*}
where $\probabilityEventNoDown{r}^{i}_{j} = \min\left\{ \probabilityEventNoDown{p}^{i}_{j} , \probabilityEventNoDown{q}^{}_{j} \right\}$ and $\probabilityEventNoDown{s}^{i}_{j} = \max\left\{ \probabilityEventNoDown{p}^{i}_{j} , \probabilityEventNoDown{q}^{i}_{j} \right\} - \min\left\{ \probabilityEventNoDown{p}^{i}_{j} , \probabilityEventNoDown{q}^{i}_{j} \right\}$, one can verify from direct calculation that this is an effect of the theory, since it complies with \autoref{pos:MCT:2}, and that it satisfies the following relations
\begin{equation*}
	\label{eqt:mct:compatibleObsInstr}
	\begin{aligned}
		& \rbraSystem{\observationEvent{\observationEventNoDown{a}}{x}}{A} = \sum_{\outcomeDouble{i}{j} \in \outcomeSpaceConditioned{V}{x}} \rbraSystem{\observationEvent{c}{\outcomeDouble{i}{j}}}{A} \quad \forall\outcomeSingle{x}{X}, \\
		&\rbraSystem{\observationEvent{b}{y}}{A} = \sum_{\outcomeDouble{i}{j} \in \outcomeSpaceConditioned{W}{y}} \rbraSystem{\observationEvent{c}{\outcomeDouble{i}{j}}}{A} \quad \forall\outcomeSingle{y}{Y},
	\end{aligned}
\end{equation*}
where the ensembles $\left\{\outcomeSpaceConditioned{V}{x} \right\}_{\outcomeSingle{x}{X}}$ and $\left\{\outcomeSpaceConditioned{W}{y} \right\}_{\outcomeSingle{y}{Y}}$ are appropriate disjoint partitions of $\outcomeSpaceDouble{I}{J}$. The proof is now concluded since we have shown that the two observation tests are compatible.
%\eqref{eqt:mct:compatibleObsInstr} represents a particular case of \eqref{eqt:OPT:obsInstr:incompatibility:noExcluding}. The former can be obtained by the latter simply taking $\conditionedEvent{P}{l}{Z} = \identityTest{}$ and $\rbra{\observationEvent{\observationEventNoDown{c}}{\outcomeDouble{i}{j}}} = \rbra{\observationUniqueDeterministic}\event{C}{z}$.

\subsection{A property of MCT's tests}
%\subsection{MCT satisfies no-information without disturbance}\label{app:mct-NIWD}
%To formalize the proof that MCT satisfies the property of no-information without disturbance, in addition to \autoref{thm:OPT:minimal:transf:stabilization} we also need to show that whenever one considers a Cauchy sequence of transformations, as in \eqref{eqt:sequenceJelly}, also the ancillary system \system{C} stabilizes. This means that one can always find a subsequence of the form 
%\begin{equation*}
%	\left\{
%	\genericT{\prim{A}}{\prim{B}}{E}{C}{\preparationEventNoDown{\rho}_{n}}{\observationEventNoDown{a}_{n}}{\Braid^{\left(1\right)}}{\Braid^{\left(2\right)}}
%	\right\}_{n \in \mathbb{N}}.
%\end{equation*}	

An interesting aspect of MCT's tests is that the ancillary system \system{C} can always neglected and considered only through a coarse-graining operation.

To show this it is sufficient to observe that any state of the theory can be uniquely decomposed on the vertices of the simplex which is the state space, and that any effect can be written as coarse graining of the effects that perfectly discriminate the vertices of the simplex:
\begin{align*}
	\rketSystem{\preparationEventNoDown{\rho}}{A} & = \sum_{i = 1}^{\sysDimension{A}} p_{i} \rketSystem{\preparationEventNoDown{i}}{A}, \\
	\rbraSystem{\observationEventNoDown{a}}{A} & = \sum_{j = 1}^{\sysDimension{A}} d_{j} \rbraSystem{\observationEventNoDown{j}}{A},
\end{align*}
where $p_{i} \in [0,1] \, \forall i = 1,\ldots,\sysDimension{A}$ and $\sum_{i = 1}^{\sysDimension{A}} \leq 1$, and $d_{j} \in [0,1] \, \forall j = 1,\ldots,\sysDimension{A}$.\\
This property still holds also for states and effects of composite systems. Therefore, any transformation of MCT obtained as parallel and sequential composition of the elements in \eqref{eqt:MCT:allowedTransf} can be rewritten as
\allowDisplayBreaks{
	\begin{align*}
		& \genericT{\prim{A}}{\prim{B}}{E}{C}{\preparationEventNoDown{\rho}}{\observationEventNoDown{a}}{\Braid^{\left(1\right)}}{\Braid^{\left(2\right)}}  \\[10pt]
		& = \sum_{i = 1}^{\sysDimension{\prim{B}}} \sum_{j = 1}^{\sysDimension{\prim{A}}} \sum_{m = 1}^{\sysDimension{C}} \sum_{\prim{m} = 1}^{\sysDimension{C}} p_{mi} d_{\prim{m}j} \genericT{\prim{A}}{\prim{B}}{E}{C}{\preparationEventNoDown{(mi)}}{(\prim{\observationEventNoDown{m}}\observationEventNoDown{i})}{\Braid^{\left(1\right)}}{\Braid^{\left(2\right)}} \\[10pt]
		& = \sum_{i = 1}^{\sysDimension{\prim{B}}} \sum_{j = 1}^{\sysDimension{\prim{A}}} \sum_{\prim{m} = 1}^{\sysDimension{C}} p_{mi} d_{\prim{m}j} 
		\myQcircuit{
			&\pureghost{}&\prepareC{\preparationEventNoDown{m}}&\qw&\qw&\s{C}\qw&\qw&\qw&\measureD{\prim{\observationEventNoDown{m}}}&
			\\
			&\s{A}\qw&\multigate{1}{\Braid^{\left(1\right)}}&\s{\prim{A}}\qw&\measureD{\observationEventNoDown{j}}&\pureghost{}&\prepareC{\preparationEventNoDown{i}}&\s{\prim{B}}\qw&\multigate{1}{\Braid^{\left(2\right)}}&\s{B}\qw&\qw&
			\\
			&\pureghost{}&\pureghost{\Braid^{\left(1\right)}}&\qw&\qw&\s{E}\qw&\qw&\qw&\ghost{\Braid^{\left(2\right)}}&	
		}  \\[10pt]
		& = \sum_{i = 1}^{\sysDimension{\prim{B}}} \sum_{j = 1}^{\sysDimension{\prim{A}}} \sum_{m = 1}^{\sysDimension{C}} \sum_{\prim{m} = 1}^{\sysDimension{C}} p_{mi} d_{\prim{m}j} \kronekerDelta{m}{\prim{m}} 
		\myQcircuit{
			&\s{A}\qw&\multigate{1}{\Braid^{\left(1\right)}}&\s{\prim{A}}\qw&\measureD{\observationEventNoDown{j}}&\pureghost{}&\prepareC{\preparationEventNoDown{i}}&\s{\prim{B}}\qw&\multigate{1}{\Braid^{\left(2\right)}}&\s{B}\qw&\qw&
			\\	&\pureghost{}&\pureghost{\Braid^{\left(1\right)}}&\qw&\qw&\s{E}\qw&\qw&\qw&\ghost{\Braid^{\left(2\right)}}&	
		}  \\[10pt]
		& = \sum_{i = 1}^{\sysDimension{\prim{B}}} \sum_{j = 1}^{\sysDimension{\prim{A}}} \sum_{m = 1}^{\sysDimension{C}} p_{mi} d_{mj}
		\myQcircuit{
			&\s{A}\qw&\multigate{1}{\Braid^{\left(1\right)}}&\s{\prim{A}}\qw&\measureD{\observationEventNoDown{j}}&\pureghost{}&\prepareC{\preparationEventNoDown{i}}&\s{\prim{B}}\qw&\multigate{1}{\Braid^{\left(2\right)}}&\s{B}\qw&\qw&
			\\
			&\pureghost{}&\pureghost{\Braid^{\left(1\right)}}&\qw&\qw&\s{E}\qw&\qw&\qw&\ghost{\Braid^{\left(2\right)}}&	
		}.
	\end{align*}
}

%This concludes our discussion. %We have proven a stronger property than the one required for the proof; in MCT the ancillary system \system{C} always gives a contribution that is negligible unless simple algebraic passages.\\

%We have therefore formally proven all the necessary elements in order to show that MCT satisfies no-information without disturbance. The proof then concludes by following the steps discussed in the main text.
%\end{widetext}
\twocolumngrid

\bibliography{biblio}

%apsrev4-2.bst 2019-01-14 (MD) hand-edited version of apsrev4-1.bst
%Control: key (0)
%Control: author (72) initials jnrlst
%Control: editor formatted (1) identically to author
%Control: production of article title (-1) disabled
%Control: page (0) single
%Control: year (1) truncated
%Control: production of eprint (0) enabled
\begin{thebibliography}{36}%
\makeatletter
\providecommand \@ifxundefined [1]{%
 \@ifx{#1\undefined}
}%
\providecommand \@ifnum [1]{%
 \ifnum #1\expandafter \@firstoftwo
 \else \expandafter \@secondoftwo
 \fi
}%
\providecommand \@ifx [1]{%
 \ifx #1\expandafter \@firstoftwo
 \else \expandafter \@secondoftwo
 \fi
}%
\providecommand \natexlab [1]{#1}%
\providecommand \enquote  [1]{``#1''}%
\providecommand \bibnamefont  [1]{#1}%
\providecommand \bibfnamefont [1]{#1}%
\providecommand \citenamefont [1]{#1}%
\providecommand \href@noop [0]{\@secondoftwo}%
\providecommand \href [0]{\begingroup \@sanitize@url \@href}%
\providecommand \@href[1]{\@@startlink{#1}\@@href}%
\providecommand \@@href[1]{\endgroup#1\@@endlink}%
\providecommand \@sanitize@url [0]{\catcode `\\12\catcode `\$12\catcode
  `\&12\catcode `\#12\catcode `\^12\catcode `\_12\catcode `\%12\relax}%
\providecommand \@@startlink[1]{}%
\providecommand \@@endlink[0]{}%
\providecommand \url  [0]{\begingroup\@sanitize@url \@url }%
\providecommand \@url [1]{\endgroup\@href {#1}{\urlprefix }}%
\providecommand \urlprefix  [0]{URL }%
\providecommand \Eprint [0]{\href }%
\providecommand \doibase [0]{https://doi.org/}%
\providecommand \selectlanguage [0]{\@gobble}%
\providecommand \bibinfo  [0]{\@secondoftwo}%
\providecommand \bibfield  [0]{\@secondoftwo}%
\providecommand \translation [1]{[#1]}%
\providecommand \BibitemOpen [0]{}%
\providecommand \bibitemStop [0]{}%
\providecommand \bibitemNoStop [0]{.\EOS\space}%
\providecommand \EOS [0]{\spacefactor3000\relax}%
\providecommand \BibitemShut  [1]{\csname bibitem#1\endcsname}%
\let\auto@bib@innerbib\@empty
%</preamble>
\bibitem [{\citenamefont {Heisenberg}(1927)}]{article:heisemberg}%
  \BibitemOpen
  \bibfield  {author} {\bibinfo {author} {\bibfnamefont {W.}~\bibnamefont
  {Heisenberg}},\ }\href {https://doi.org/10.1007/BF01397280} {\bibfield
  {journal} {\bibinfo  {journal} {Zeitschrift f{\"u}r Physik}\ }\textbf
  {\bibinfo {volume} {43}},\ \bibinfo {pages} {172} (\bibinfo {year}
  {1927})}\BibitemShut {NoStop}%
\bibitem [{\citenamefont {Robertson}(1929)}]{article:robertson}%
  \BibitemOpen
  \bibfield  {author} {\bibinfo {author} {\bibfnamefont {H.~P.}\ \bibnamefont
  {Robertson}},\ }\href {https://doi.org/10.1103/PhysRev.34.163} {\bibfield
  {journal} {\bibinfo  {journal} {Phys. Rev.}\ }\textbf {\bibinfo {volume}
  {34}},\ \bibinfo {pages} {163} (\bibinfo {year} {1929})}\BibitemShut
  {NoStop}%
\bibitem [{\citenamefont {Lahti}(1980)}]{article:Lahti1980}%
  \BibitemOpen
  \bibfield  {author} {\bibinfo {author} {\bibfnamefont {P.~J.}\ \bibnamefont
  {Lahti}},\ }\href {https://doi.org/10.1007/BF00671482} {\bibfield  {journal}
  {\bibinfo  {journal} {International Journal of Theoretical Physics}\ }\textbf
  {\bibinfo {volume} {19}},\ \bibinfo {pages} {905} (\bibinfo {year}
  {1980})}\BibitemShut {NoStop}%
\bibitem [{\citenamefont {Fuchs}\ and\ \citenamefont
  {Peres}(1996)}]{PhysRevA.53.2038}%
  \BibitemOpen
  \bibfield  {author} {\bibinfo {author} {\bibfnamefont {C.~A.}\ \bibnamefont
  {Fuchs}}\ and\ \bibinfo {author} {\bibfnamefont {A.}~\bibnamefont {Peres}},\
  }\href {https://doi.org/10.1103/PhysRevA.53.2038} {\bibfield  {journal}
  {\bibinfo  {journal} {Phys. Rev. A}\ }\textbf {\bibinfo {volume} {53}},\
  \bibinfo {pages} {2038} (\bibinfo {year} {1996})}\BibitemShut {NoStop}%
\bibitem [{\citenamefont {Ozawa}(2003)}]{article:ozawaSecond}%
  \BibitemOpen
  \bibfield  {author} {\bibinfo {author} {\bibfnamefont {M.}~\bibnamefont
  {Ozawa}},\ }\href {https://doi.org/10.1103/PhysRevA.67.042105} {\bibfield
  {journal} {\bibinfo  {journal} {Phys. Rev. A}\ }\textbf {\bibinfo {volume}
  {67}},\ \bibinfo {pages} {042105} (\bibinfo {year} {2003})}\BibitemShut
  {NoStop}%
\bibitem [{\citenamefont {D'Ariano}(2003)}]{DAriano:2003aa}%
  \BibitemOpen
  \bibfield  {author} {\bibinfo {author} {\bibfnamefont {G.}~\bibnamefont
  {D'Ariano}},\ }\href {https://doi.org/https://doi.org/10.1002/prop.200310045}
  {\bibfield  {journal} {\bibinfo  {journal} {Fortschritte der Physik}\
  }\textbf {\bibinfo {volume} {51}},\ \bibinfo {pages} {318} (\bibinfo {year}
  {2003})}\BibitemShut {NoStop}%
\bibitem [{\citenamefont {Busch}(2009)}]{article:busch2009}%
  \BibitemOpen
  \bibfield  {author} {\bibinfo {author} {\bibfnamefont {P.}~\bibnamefont
  {Busch}},\ }\bibinfo {title} {``{N}o information without disturbance'':
  Quantum limitations of measurement},\ in\ \href
  {https://doi.org/10.1007/978-1-4020-9107-0_13} {\emph {\bibinfo {booktitle}
  {Quantum Reality, Relativistic Causality, and Closing the Epistemic Circle:
  Essays in Honour of Abner Shimony}}}\ (\bibinfo  {publisher} {Springer},\
  \bibinfo {address} {Dordrecht},\ \bibinfo {year} {2009})\ pp.\ \bibinfo
  {pages} {229--256}\BibitemShut {NoStop}%
\bibitem [{\citenamefont {Busch}\ \emph {et~al.}(2014)\citenamefont {Busch},
  \citenamefont {Lahti},\ and\ \citenamefont {Werner}}]{article:busch2004}%
  \BibitemOpen
  \bibfield  {author} {\bibinfo {author} {\bibfnamefont {P.}~\bibnamefont
  {Busch}}, \bibinfo {author} {\bibfnamefont {P.}~\bibnamefont {Lahti}},\ and\
  \bibinfo {author} {\bibfnamefont {R.~F.}\ \bibnamefont {Werner}},\ }\href
  {https://doi.org/10.1103/revmodphys.86.1261} {\bibfield  {journal} {\bibinfo
  {journal} {Reviews of Modern Physics}\ }\textbf {\bibinfo {volume} {86}},\
  \bibinfo {pages} {1261} (\bibinfo {year} {2014})}\BibitemShut {NoStop}%
\bibitem [{\citenamefont {Heinosaari}(2016)}]{article:teikoSimultaneous}%
  \BibitemOpen
  \bibfield  {author} {\bibinfo {author} {\bibfnamefont {T.}~\bibnamefont
  {Heinosaari}},\ }\href {https://doi.org/10.1103/PhysRevA.93.042118}
  {\bibfield  {journal} {\bibinfo  {journal} {Phys. Rev. A}\ }\textbf {\bibinfo
  {volume} {93}},\ \bibinfo {pages} {042118} (\bibinfo {year}
  {2016})}\BibitemShut {NoStop}%
\bibitem [{\citenamefont {Heinosaari}\ \emph {et~al.}(2019)\citenamefont
  {Heinosaari}, \citenamefont {Lepp{\"a}j{\"a}rvi},\ and\ \citenamefont
  {Pl{\'{a}}vala}}]{article:heinosaari2019}%
  \BibitemOpen
  \bibfield  {author} {\bibinfo {author} {\bibfnamefont {T.}~\bibnamefont
  {Heinosaari}}, \bibinfo {author} {\bibfnamefont {L.}~\bibnamefont
  {Lepp{\"a}j{\"a}rvi}},\ and\ \bibinfo {author} {\bibfnamefont
  {M.}~\bibnamefont {Pl{\'{a}}vala}},\ }\href
  {https://doi.org/10.22331/q-2019-07-08-157} {\bibfield  {journal} {\bibinfo
  {journal} {Quantum}\ }\textbf {\bibinfo {volume} {3}},\ \bibinfo {pages}
  {157} (\bibinfo {year} {2019})}\BibitemShut {NoStop}%
\bibitem [{Note1()}]{Note1}%
  \BibitemOpen
  \bibinfo {note} {From now on, we will use the word \protect \emph
  {observation} to denote a measurement where the output system is discarded,
  or simply disregarded. Notice that any measurement determines an observation,
  corresponding to the measurement itself followed by a discard of the output
  system.}\BibitemShut {Stop}%
\bibitem [{\citenamefont {Chiribella}\ \emph {et~al.}(2010)\citenamefont
  {Chiribella}, \citenamefont {D'Ariano},\ and\ \citenamefont
  {Perinotti}}]{article:probTheoriesPurif}%
  \BibitemOpen
  \bibfield  {author} {\bibinfo {author} {\bibfnamefont {G.}~\bibnamefont
  {Chiribella}}, \bibinfo {author} {\bibfnamefont {G.~M.}\ \bibnamefont
  {D'Ariano}},\ and\ \bibinfo {author} {\bibfnamefont {P.}~\bibnamefont
  {Perinotti}},\ }\href {https://doi.org/10.1103/PhysRevA.81.062348} {\bibfield
   {journal} {\bibinfo  {journal} {Phys. Rev. A}\ }\textbf {\bibinfo {volume}
  {81}},\ \bibinfo {pages} {062348} (\bibinfo {year} {2010})}\BibitemShut
  {NoStop}%
\bibitem [{\citenamefont {Chiribella}\ \emph {et~al.}(2016)\citenamefont
  {Chiribella}, \citenamefont {D'Ariano},\ and\ \citenamefont
  {Perinotti}}]{book:quantumFromPrinciples}%
  \BibitemOpen
  \bibfield  {author} {\bibinfo {author} {\bibfnamefont {G.}~\bibnamefont
  {Chiribella}}, \bibinfo {author} {\bibfnamefont {G.~M.}\ \bibnamefont
  {D'Ariano}},\ and\ \bibinfo {author} {\bibfnamefont {P.}~\bibnamefont
  {Perinotti}},\ }\bibinfo {title} {Quantum from principles},\ in\ \href
  {https://doi.org/10.1007/978-94-017-7303-4_6} {\emph {\bibinfo {booktitle}
  {Quantum Theory: Informational Foundations and Foils}}},\ \bibinfo {editor}
  {edited by\ \bibinfo {editor} {\bibfnamefont {G.}~\bibnamefont {Chiribella}}\
  and\ \bibinfo {editor} {\bibfnamefont {R.~W.}\ \bibnamefont {Spekkens}}}\
  (\bibinfo  {publisher} {Springer},\ \bibinfo {address} {Dordrecht},\ \bibinfo
  {year} {2016})\ pp.\ \bibinfo {pages} {171--221}\BibitemShut {NoStop}%
\bibitem [{\citenamefont {D'Ariano}\ \emph {et~al.}(2017)\citenamefont
  {D'Ariano}, \citenamefont {Chiribella},\ and\ \citenamefont
  {Perinotti}}]{book:quantumTheoryFromFirstPrinciples}%
  \BibitemOpen
  \bibfield  {author} {\bibinfo {author} {\bibfnamefont {G.~M.}\ \bibnamefont
  {D'Ariano}}, \bibinfo {author} {\bibfnamefont {G.}~\bibnamefont
  {Chiribella}},\ and\ \bibinfo {author} {\bibfnamefont {P.}~\bibnamefont
  {Perinotti}},\ }\href {https://doi.org/10.1017/9781107338340} {\emph
  {\bibinfo {title} {Quantum Theory from First Principles: An Informational
  Approach}}}\ (\bibinfo  {publisher} {Cambridge University Press},\ \bibinfo
  {year} {2017})\BibitemShut {NoStop}%
\bibitem [{\citenamefont {D'Ariano}\ \emph
  {et~al.}(2020{\natexlab{a}})\citenamefont {D'Ariano}, \citenamefont {Erba},\
  and\ \citenamefont {Perinotti}}]{article:BCT}%
  \BibitemOpen
  \bibfield  {author} {\bibinfo {author} {\bibfnamefont {G.~M.}\ \bibnamefont
  {D'Ariano}}, \bibinfo {author} {\bibfnamefont {M.}~\bibnamefont {Erba}},\
  and\ \bibinfo {author} {\bibfnamefont {P.}~\bibnamefont {Perinotti}},\ }\href
  {https://doi.org/10.1103/PhysRevA.102.052216} {\bibfield  {journal} {\bibinfo
   {journal} {Phys. Rev. A}\ }\textbf {\bibinfo {volume} {102}},\ \bibinfo
  {pages} {052216} (\bibinfo {year} {2020}{\natexlab{a}})}\BibitemShut
  {NoStop}%
\bibitem [{\citenamefont {Richens}\ \emph {et~al.}(2017)\citenamefont
  {Richens}, \citenamefont {Selby},\ and\ \citenamefont
  {Al-Safi}}]{article:shelby}%
  \BibitemOpen
  \bibfield  {author} {\bibinfo {author} {\bibfnamefont {J.~G.}\ \bibnamefont
  {Richens}}, \bibinfo {author} {\bibfnamefont {J.~H.}\ \bibnamefont {Selby}},\
  and\ \bibinfo {author} {\bibfnamefont {S.~W.}\ \bibnamefont {Al-Safi}},\
  }\href {https://doi.org/10.1103/PhysRevLett.119.080503} {\bibfield  {journal}
  {\bibinfo  {journal} {Phys. Rev. Lett.}\ }\textbf {\bibinfo {volume} {119}},\
  \bibinfo {pages} {080503} (\bibinfo {year} {2017})}\BibitemShut {NoStop}%
\bibitem [{\citenamefont {Catani}\ \emph {et~al.}(2022)\citenamefont {Catani},
  \citenamefont {Leifer}, \citenamefont {Scala}, \citenamefont {Schmid},\ and\
  \citenamefont {Spekkens}}]{article:spekkensSecond}%
  \BibitemOpen
  \bibfield  {author} {\bibinfo {author} {\bibfnamefont {L.}~\bibnamefont
  {Catani}}, \bibinfo {author} {\bibfnamefont {M.}~\bibnamefont {Leifer}},
  \bibinfo {author} {\bibfnamefont {G.}~\bibnamefont {Scala}}, \bibinfo
  {author} {\bibfnamefont {D.}~\bibnamefont {Schmid}},\ and\ \bibinfo {author}
  {\bibfnamefont {R.~W.}\ \bibnamefont {Spekkens}},\ }\href
  {https://doi.org/10.1103/PhysRevLett.129.240401} {\bibfield  {journal}
  {\bibinfo  {journal} {Phys. Rev. Lett.}\ }\textbf {\bibinfo {volume} {129}},\
  \bibinfo {pages} {240401} (\bibinfo {year} {2022})}\BibitemShut {NoStop}%
\bibitem [{\citenamefont {Spekkens}(2007)}]{PhysRevA.75.032110}%
  \BibitemOpen
  \bibfield  {author} {\bibinfo {author} {\bibfnamefont {R.~W.}\ \bibnamefont
  {Spekkens}},\ }\href {https://doi.org/10.1103/PhysRevA.75.032110} {\bibfield
  {journal} {\bibinfo  {journal} {Phys. Rev. A}\ }\textbf {\bibinfo {volume}
  {75}},\ \bibinfo {pages} {032110} (\bibinfo {year} {2007})}\BibitemShut
  {NoStop}%
\bibitem [{\citenamefont {Spekkens}(2016)}]{Spekkens2016}%
  \BibitemOpen
  \bibfield  {author} {\bibinfo {author} {\bibfnamefont {R.~W.}\ \bibnamefont
  {Spekkens}},\ }\bibinfo {title} {Quasi-quantization: Classical statistical
  theories with an epistemic restriction},\ in\ \href
  {https://doi.org/10.1007/978-94-017-7303-4_4} {\emph {\bibinfo {booktitle}
  {Quantum Theory: Informational Foundations and Foils}}},\ \bibinfo {editor}
  {edited by\ \bibinfo {editor} {\bibfnamefont {G.}~\bibnamefont {Chiribella}}\
  and\ \bibinfo {editor} {\bibfnamefont {R.~W.}\ \bibnamefont {Spekkens}}}\
  (\bibinfo  {publisher} {Springer},\ \bibinfo {address} {Dordrecht},\ \bibinfo
  {year} {2016})\ pp.\ \bibinfo {pages} {83--135}\BibitemShut {NoStop}%
\bibitem [{\citenamefont {Pl\'avala}\ and\ \citenamefont
  {Kleinmann}(2022)}]{PhysRevLett.128.040405}%
  \BibitemOpen
  \bibfield  {author} {\bibinfo {author} {\bibfnamefont {M.}~\bibnamefont
  {Pl\'avala}}\ and\ \bibinfo {author} {\bibfnamefont {M.}~\bibnamefont
  {Kleinmann}},\ }\href {https://doi.org/10.1103/PhysRevLett.128.040405}
  {\bibfield  {journal} {\bibinfo  {journal} {Phys. Rev. Lett.}\ }\textbf
  {\bibinfo {volume} {128}},\ \bibinfo {pages} {040405} (\bibinfo {year}
  {2022})}\BibitemShut {NoStop}%
\bibitem [{\citenamefont {Gibbons}\ \emph {et~al.}(2004)\citenamefont
  {Gibbons}, \citenamefont {Hoffman},\ and\ \citenamefont
  {Wootters}}]{PhysRevA.70.062101}%
  \BibitemOpen
  \bibfield  {author} {\bibinfo {author} {\bibfnamefont {K.~S.}\ \bibnamefont
  {Gibbons}}, \bibinfo {author} {\bibfnamefont {M.~J.}\ \bibnamefont
  {Hoffman}},\ and\ \bibinfo {author} {\bibfnamefont {W.~K.}\ \bibnamefont
  {Wootters}},\ }\href {https://doi.org/10.1103/PhysRevA.70.062101} {\bibfield
  {journal} {\bibinfo  {journal} {Phys. Rev. A}\ }\textbf {\bibinfo {volume}
  {70}},\ \bibinfo {pages} {062101} (\bibinfo {year} {2004})}\BibitemShut
  {NoStop}%
\bibitem [{\citenamefont {Bartlett}\ \emph {et~al.}(2012)\citenamefont
  {Bartlett}, \citenamefont {Rudolph},\ and\ \citenamefont
  {Spekkens}}]{PhysRevA.86.012103}%
  \BibitemOpen
  \bibfield  {author} {\bibinfo {author} {\bibfnamefont {S.~D.}\ \bibnamefont
  {Bartlett}}, \bibinfo {author} {\bibfnamefont {T.}~\bibnamefont {Rudolph}},\
  and\ \bibinfo {author} {\bibfnamefont {R.~W.}\ \bibnamefont {Spekkens}},\
  }\href {https://doi.org/10.1103/PhysRevA.86.012103} {\bibfield  {journal}
  {\bibinfo  {journal} {Phys. Rev. A}\ }\textbf {\bibinfo {volume} {86}},\
  \bibinfo {pages} {012103} (\bibinfo {year} {2012})}\BibitemShut {NoStop}%
\bibitem [{\citenamefont {Schmid}\ \emph {et~al.}(2021)\citenamefont {Schmid},
  \citenamefont {Selby}, \citenamefont {Wolfe}, \citenamefont {Kunjwal},\ and\
  \citenamefont {Spekkens}}]{PRXQuantum.2.010331}%
  \BibitemOpen
  \bibfield  {author} {\bibinfo {author} {\bibfnamefont {D.}~\bibnamefont
  {Schmid}}, \bibinfo {author} {\bibfnamefont {J.~H.}\ \bibnamefont {Selby}},
  \bibinfo {author} {\bibfnamefont {E.}~\bibnamefont {Wolfe}}, \bibinfo
  {author} {\bibfnamefont {R.}~\bibnamefont {Kunjwal}},\ and\ \bibinfo {author}
  {\bibfnamefont {R.~W.}\ \bibnamefont {Spekkens}},\ }\href
  {https://doi.org/10.1103/PRXQuantum.2.010331} {\bibfield  {journal} {\bibinfo
   {journal} {PRX Quantum}\ }\textbf {\bibinfo {volume} {2}},\ \bibinfo {pages}
  {010331} (\bibinfo {year} {2021})}\BibitemShut {NoStop}%
\bibitem [{\citenamefont {Schmid}\ \emph {et~al.}(2020)\citenamefont {Schmid},
  \citenamefont {Selby}, \citenamefont {Pusey},\ and\ \citenamefont
  {Spekkens}}]{schmid2020structure}%
  \BibitemOpen
  \bibfield  {author} {\bibinfo {author} {\bibfnamefont {D.}~\bibnamefont
  {Schmid}}, \bibinfo {author} {\bibfnamefont {J.~H.}\ \bibnamefont {Selby}},
  \bibinfo {author} {\bibfnamefont {M.~F.}\ \bibnamefont {Pusey}},\ and\
  \bibinfo {author} {\bibfnamefont {R.~W.}\ \bibnamefont {Spekkens}},\
  }\href@noop {} {\bibinfo {title} {A structure theorem for
  generalized-noncontextual ontological models}} (\bibinfo {year} {2020}),\
  \Eprint {https://arxiv.org/abs/2005.07161} {arXiv:2005.07161 [quant-ph]}
  \BibitemShut {NoStop}%
\bibitem [{\citenamefont {D'Ariano}\ \emph
  {et~al.}(2020{\natexlab{b}})\citenamefont {D'Ariano}, \citenamefont
  {Perinotti},\ and\ \citenamefont {Tosini}}]{article:informationDisturbance}%
  \BibitemOpen
  \bibfield  {author} {\bibinfo {author} {\bibfnamefont {G.~M.}\ \bibnamefont
  {D'Ariano}}, \bibinfo {author} {\bibfnamefont {P.}~\bibnamefont
  {Perinotti}},\ and\ \bibinfo {author} {\bibfnamefont {A.}~\bibnamefont
  {Tosini}},\ }\href {https://doi.org/10.22331/q-2020-11-16-363} {\bibfield
  {journal} {\bibinfo  {journal} {{Quantum}}\ }\textbf {\bibinfo {volume}
  {4}},\ \bibinfo {pages} {363} (\bibinfo {year}
  {2020}{\natexlab{b}})}\BibitemShut {NoStop}%
\bibitem [{\citenamefont {Barnum}\ \emph {et~al.}(2007)\citenamefont {Barnum},
  \citenamefont {Barrett}, \citenamefont {Leifer},\ and\ \citenamefont
  {Wilce}}]{PhysRevLett.99.240501}%
  \BibitemOpen
  \bibfield  {author} {\bibinfo {author} {\bibfnamefont {H.}~\bibnamefont
  {Barnum}}, \bibinfo {author} {\bibfnamefont {J.}~\bibnamefont {Barrett}},
  \bibinfo {author} {\bibfnamefont {M.}~\bibnamefont {Leifer}},\ and\ \bibinfo
  {author} {\bibfnamefont {A.}~\bibnamefont {Wilce}},\ }\href
  {https://doi.org/10.1103/PhysRevLett.99.240501} {\bibfield  {journal}
  {\bibinfo  {journal} {Phys. Rev. Lett.}\ }\textbf {\bibinfo {volume} {99}},\
  \bibinfo {pages} {240501} (\bibinfo {year} {2007})}\BibitemShut {NoStop}%
\bibitem [{\citenamefont {Barnum}\ \emph {et~al.}(2006)\citenamefont {Barnum},
  \citenamefont {Barrett}, \citenamefont {Leifer},\ and\ \citenamefont
  {Wilce}}]{barnum2006cloning}%
  \BibitemOpen
  \bibfield  {author} {\bibinfo {author} {\bibfnamefont {H.}~\bibnamefont
  {Barnum}}, \bibinfo {author} {\bibfnamefont {J.}~\bibnamefont {Barrett}},
  \bibinfo {author} {\bibfnamefont {M.}~\bibnamefont {Leifer}},\ and\ \bibinfo
  {author} {\bibfnamefont {A.}~\bibnamefont {Wilce}},\ }\href@noop {} {\bibinfo
  {title} {{Cloning and Broadcasting in Generic Probabilistic Theories}}}
  (\bibinfo {year} {2006}),\ \Eprint {https://arxiv.org/abs/quant-ph/0611295}
  {arXiv:quant-ph/0611295 [quant-ph]} \BibitemShut {NoStop}%
\bibitem [{\citenamefont {L{\"udwig}}(1985)}]{ludwig}%
  \BibitemOpen
  \bibfield  {author} {\bibinfo {author} {\bibfnamefont {G.}~\bibnamefont
  {L{\"udwig}}},\ }\href@noop {} {\emph {\bibinfo {title} {An Axiomatic Basis
  for Quantum Mechanics}}}\ (\bibinfo  {publisher} {Springer-Verlag},\ \bibinfo
  {address} {Berlin Heidelberg},\ \bibinfo {year} {1985})\BibitemShut {NoStop}%
\bibitem [{\citenamefont {Busch}\ \emph {et~al.}(2013)\citenamefont {Busch},
  \citenamefont {Heinosaari}, \citenamefont {Schultz},\ and\ \citenamefont
  {Stevens}}]{Busch_2013}%
  \BibitemOpen
  \bibfield  {author} {\bibinfo {author} {\bibfnamefont {P.}~\bibnamefont
  {Busch}}, \bibinfo {author} {\bibfnamefont {T.}~\bibnamefont {Heinosaari}},
  \bibinfo {author} {\bibfnamefont {J.}~\bibnamefont {Schultz}},\ and\ \bibinfo
  {author} {\bibfnamefont {N.}~\bibnamefont {Stevens}},\ }\href
  {https://doi.org/10.1209/0295-5075/103/10002} {\bibfield  {journal} {\bibinfo
   {journal} {Europhysics Letters}\ }\textbf {\bibinfo {volume} {103}},\
  \bibinfo {pages} {10002} (\bibinfo {year} {2013})}\BibitemShut {NoStop}%
\bibitem [{\citenamefont {Buscemi}\ \emph {et~al.}(2020)\citenamefont
  {Buscemi}, \citenamefont {Chitambar},\ and\ \citenamefont
  {Zhou}}]{PhysRevLett.124.120401}%
  \BibitemOpen
  \bibfield  {author} {\bibinfo {author} {\bibfnamefont {F.}~\bibnamefont
  {Buscemi}}, \bibinfo {author} {\bibfnamefont {E.}~\bibnamefont {Chitambar}},\
  and\ \bibinfo {author} {\bibfnamefont {W.}~\bibnamefont {Zhou}},\ }\href
  {https://doi.org/10.1103/PhysRevLett.124.120401} {\bibfield  {journal}
  {\bibinfo  {journal} {Phys. Rev. Lett.}\ }\textbf {\bibinfo {volume} {124}},\
  \bibinfo {pages} {120401} (\bibinfo {year} {2020})}\BibitemShut {NoStop}%
\bibitem [{\citenamefont {D'Ariano}\ \emph {et~al.}(2022)\citenamefont
  {D'Ariano}, \citenamefont {Perinotti},\ and\ \citenamefont
  {Tosini}}]{article:incompatibility}%
  \BibitemOpen
  \bibfield  {author} {\bibinfo {author} {\bibfnamefont {G.~M.}\ \bibnamefont
  {D'Ariano}}, \bibinfo {author} {\bibfnamefont {P.}~\bibnamefont
  {Perinotti}},\ and\ \bibinfo {author} {\bibfnamefont {A.}~\bibnamefont
  {Tosini}},\ }\href {https://doi.org/10.1088/1751-8121/ac88a7} {\bibfield
  {journal} {\bibinfo  {journal} {Journal of Physics A: Mathematical and
  Theoretical}\ }\textbf {\bibinfo {volume} {55}},\ \bibinfo {pages} {394006}
  (\bibinfo {year} {2022})}\BibitemShut {NoStop}%
\bibitem [{\citenamefont {D'Ariano}\ \emph
  {et~al.}(2020{\natexlab{c}})\citenamefont {D'Ariano}, \citenamefont {Erba},\
  and\ \citenamefont {Perinotti}}]{article:classicalEntanglement}%
  \BibitemOpen
  \bibfield  {author} {\bibinfo {author} {\bibfnamefont {G.~M.}\ \bibnamefont
  {D'Ariano}}, \bibinfo {author} {\bibfnamefont {M.}~\bibnamefont {Erba}},\
  and\ \bibinfo {author} {\bibfnamefont {P.}~\bibnamefont {Perinotti}},\ }\href
  {https://doi.org/10.1103/PhysRevA.101.042118} {\bibfield  {journal} {\bibinfo
   {journal} {Phys. Rev. A}\ }\textbf {\bibinfo {volume} {101}},\ \bibinfo
  {pages} {042118} (\bibinfo {year} {2020}{\natexlab{c}})}\BibitemShut
  {NoStop}%
\bibitem [{Note2()}]{Note2}%
  \BibitemOpen
  \bibinfo {note} {More precisely, systems $\protect \ensuremath {\protect
  \mathrm {A}}$ and $\protect \ensuremath {\protect \mathrm {B}}$ are generally
  composite $\protect \ensuremath {\protect \mathrm {A}}=\protect \ensuremath
  {\protect \mathrm {A_1A_2\protect \ldots A_j}}$, $\protect \ensuremath
  {\protect \mathrm {B}}=\protect \ensuremath {\protect \mathrm {B_1B_2\protect
  \ldots B_k}}$, and $\protect \mathcal {S}^{\left (1\right )}$ permutes the
  subsystems of $\protect \ensuremath {\protect \mathrm {A}}$, while $\protect
  \mathcal {S}^{\left (2\right )}$ permute those of system $\protect
  \ensuremath {\protect \mathrm {B}}$.}\BibitemShut {Stop}%
\bibitem [{\citenamefont {Pl\'avala}(2016)}]{art:plavala_simplex}%
  \BibitemOpen
  \bibfield  {author} {\bibinfo {author} {\bibfnamefont {M.}~\bibnamefont
  {Pl\'avala}},\ }\href {https://doi.org/10.1103/PhysRevA.94.042108} {\bibfield
   {journal} {\bibinfo  {journal} {Phys. Rev. A}\ }\textbf {\bibinfo {volume}
  {94}},\ \bibinfo {pages} {042108} (\bibinfo {year} {2016})}\BibitemShut
  {NoStop}%
\bibitem [{\citenamefont {Buscemi}\ \emph {et~al.}(2023)\citenamefont
  {Buscemi}, \citenamefont {Kobayashi}, \citenamefont {Minagawa}, \citenamefont
  {Perinotti},\ and\ \citenamefont {Tosini}}]{Buscemi2023unifyingdifferent}%
  \BibitemOpen
  \bibfield  {author} {\bibinfo {author} {\bibfnamefont {F.}~\bibnamefont
  {Buscemi}}, \bibinfo {author} {\bibfnamefont {K.}~\bibnamefont {Kobayashi}},
  \bibinfo {author} {\bibfnamefont {S.}~\bibnamefont {Minagawa}}, \bibinfo
  {author} {\bibfnamefont {P.}~\bibnamefont {Perinotti}},\ and\ \bibinfo
  {author} {\bibfnamefont {A.}~\bibnamefont {Tosini}},\ }\href
  {https://doi.org/10.22331/q-2023-06-07-1035} {\bibfield  {journal} {\bibinfo
  {journal} {{Quantum}}\ }\textbf {\bibinfo {volume} {7}},\ \bibinfo {pages}
  {1035} (\bibinfo {year} {2023})}\BibitemShut {NoStop}%
\bibitem [{Note3()}]{Note3}%
  \BibitemOpen
  \bibinfo {note} {We remark that the property of classical control on outcomes
  is actually quite strong, implying e.g., the causality principle (see
  \protect \hyperref [sec:defs]{Section~\ref {sec:defs}} and Ref.~\cite
  {book:quantumTheoryFromFirstPrinciples}).}\BibitemShut {Stop}%
\end{thebibliography}%
\bibliographystyle{apsrev4-2.bst}

\end{document}